\newtheorem{cl}{Claim}
\title{A Distributed Enumeration Algorithm and Applications to All Pairs 
Shortest Paths, Diameter...}
\author{Y. M\'etivier, J.M. Robson, and A. Zemmari}
\institute{Universit\'e de Bordeaux, LaBRI, UMR CNRS 5800\\ 351 cours de la
  Lib\'eration, 33405 Talence, France\\ 
\{metivier, robson, zemmari\}@labri.fr } 
\date{ } 
\begin{document}
\maketitle
\begin{abstract}
We consider the standard message passing model; we assume the system is fully
synchronous: all processes start at the same time
 and time proceeds in synchronised rounds. In each round each vertex
can transmit a different message of size $O(1)$ to each of its neighbours.
This paper proposes and analyses a distributed enumeration algorithm 
of vertices of a graph  having a distinguished vertex
which satisfies that two vertices with  consecutive numbers are at
distance at most $3$. 
We prove that its time complexity is $O(n)$ 
where $n$ is the number of vertices of the graph.
Furthermore, the size of each message is $O(1)$ thus its bit complexity is
also $O(n).$ 
We provide some links between this enumeration and Hamiltonian graphs
from which we deduce
 that this enumeration is optimal in the sense that there
does not exist an enumeration 
which satisfies that two vertices with  consecutive numbers are at
distance at most $2$.

We deduce   from this enumeration
algorithms which compute  all pairs shortest paths
and the diameter with a time complexity and a bit complexity equal to $O(n)$.
This improves the  best known distributed 
algorithms (under the same hypotheses) for computing all pairs shortest paths
or the diameter
presented  in \cite{PRT12,HW12}   having a time 
complexity equal to  $O(n)$ and which  use messages of size $O(\log n)$ bits.

\end{abstract}
\noindent
{\bf Keywords:}
Distributed Algorithm, Biconnectivity, Bit Complexity, Cut-Edge, Cut-Vertex, 
Diameter, Girth, Hamiltonian Graph.

\section{Introduction}
\subsection{The problem}
In this paper we consider the all pairs shortest paths problem 
in a distributed network. We assume that there exists 
a distinguished vertex (called $Leader$) so that
 there exist
distributed algorithms for solving it. 
We are interested in optimal solutions in time and in number of bits for this
problem.

Distributed algorithms for solving 
the all pairs shortest paths problem find extensive use in  communication
networks and thus in distributed computing.
For example,  the shortest path between a source and a destination
is considered as the most economic.
Finally, many routing schemes use shortest paths and designing
such a scheme consists of computing the shortest routes and storing
information on vertices for routing messages (\cite{Peleg}, p. 105).

The solution presented in this paper is based on a non trivial
distributed  enumeration algorithm  which satisfies that two vertices
with consecutive numbers are at distance at most $3$.

\subsection{The {M}odel}

\subsubsection{The Network.}
We consider the standard message passing model for distributed
computing.  The communication model consists of a point-to-point
communication network described by a   connected graph 
$G=(V(G),E(G))$ $(=(V,E)$ for short)
where the vertices $V$ represent network processes and the edges $E$
represent bidirectional communication channels. Processes communicate
by message passing: a process sends a message to another by depositing
the message in the corresponding channel. In the sequel, 
we consider only connected graphs.
We assume the system is
fully synchronous, namely, all processes start at the same time
and time proceeds in synchronised rounds.

\subsubsection{Time Complexity.}
A round (cycle) of each process is composed of the following three
steps: 1. Send messages to (some of) the neighbours, 2. Receive
messages from (some of) the neighbours, 3. Perform some local
computation.  As usual the time
complexity is the number of rounds needed until every vertex has
completed its computation.

\subsubsection{Bit Complexity.}
We follow the definition given in \cite{KOSS}.
By definition, in a bit round each vertex can send/receive at most $1$ bit 
from each of its neighbours.
The bit complexity of  algorithm $\cal A$
is the number of bit rounds to complete algorithm $\cal A.$

\begin{remark}
A round of an algorithm consists of $1$  or more bit rounds. 
 The bit complexity
of a distributed  algorithm is an upper bound on the total number 
of bits exchanged per channel during its execution. It is also an upper bound
on its time complexity.

If we consider a distributed algorithm having messages of size $O(1)$ 
(and this is the case in this paper)
then the time complexity and the bit complexity
are equal modulo a multiplicative constant.
\end{remark}
The bit complexity is considered as a finer measure of
communication complexity and it has been studied for breaking symmetry
or for colouring in \cite{BNNN90,BMW} or in \cite{KOSS,DMR}.  Dinitz
et al.  explain in \cite{DMR} that it may be viewed as a natural
extension of communication complexity (introduced by Yao \cite{Yao})
to the analysis of tasks in a distributed setting. An introduction to
this area can be found in Kushilevitz and Nisan \cite{KN}.

%

\subsubsection{Network and Processes Knowledge.}

The network $G=(V,E)$ is anonymous: unique identities are not available to
distinguish the processes.  We only assume that there is an 
elected (a distinguished) vertex denoted $Leader$.
We do not assume any global knowledge of
the network, not even its size or an upper bound on its size. The
processes do not require any position or distance information. 
Each process knows from which channel it receives or to which channel
it sends a message,
thus one supposes that the network is represented by a
connected  graph with a port numbering function 
defined as follows (where $I_G(u)$ denotes the set of edges of $G$
incident to $u$):
\begin{definition}
Given a 
graph $G=(V,E)$, a \emph{port numbering} function $\delta$ is
a set of local functions $\{\delta_u \mid u \in V\}$ such that for
each vertex $u \in V$, $\delta_u$ is a bijection between $I_G(u)$ and
$[1,\deg_G(u)]$. 
\end{definition}
\subsubsection{All Pairs Shortest Paths, Diameter,  Girth, Cut-Edge and
Cut-Vertex.}
We follow definitions given in \cite{Rosen}.
A walk in a graph $G=(V,E)$ is a finite alternating sequence of
vertices and edges, beginning and ending with a vertex and where each edge is 
incident with the vertices immediately preceding and following it.
A trail is a walk in which no edge occurs more that once. A path
is a trail in which all of its vertices are different, except that the initial
and final vertices may be the same.
A walk with at least $3$ vertices in which the first and last vertices are 
the same but all other vertices are distinct is called a  cycle.

Let $G=(V,E)$ be a connected graph, let $u, v \in V$. The 
distance between $u$ and $v$  in $G$, denoted $dist_G(u,v)$, 
is the length of a shortest path between $u$ and $v$ in $G$.

Given a vertex $v$ of a connected graph, the eccentricity of
$v$ is the greatest distance from $v$ to another vertex.

The all pairs shortest paths (APSP for short) problem in $G$ is to
compute  the length of shortest paths between any pair of vertices in $G.$

The diameter of $G$, denoted $D(G)$, 
is the maximum distance between any two vertices of $G,$ 
i.e., $D(G) = \max \{dist_G(u,v) \mid u, v \in V\}$.

The girth of a graph $G$ is the length of a shortest cycle of $G.$

A cut-vertex is a vertex whose removal increases the number of 
connected components.

A cut-edge is an edge whose removal increases the number of connected
components.

We  use trees and we
follow the presentation given in \cite{CLR}. 
A tree is a connected acyclic graph. A rooted tree is a tree in which one
of the vertices is distinguished from the others (called $Leader$ in this
work).
A spanning-tree of a connected graph $G=(V,E)$ is a tree $T=(V,E')$ such that
$E'\subseteq E$.
\subsection{Our Contribution}

We present a distributed enumeration algorithm, denoted $DEA$,
which assigns to each vertex of a graph $G$ of size $n$
having a distinguished vertex, denoted $Leader$,
a unique integer of  $\{1, 2, \dots, n\}$
such that the distance between any 
two vertices having
two consecutive numbers is at most $3$. 
 This algorithm uses messages of
size $O(1)$ and has a time complexity equal to $O(n)$.

The steps of Algorithm $DEA$  are:
\begin{enumerate}
\item computation of a Breadth-First-Search (BFS)
 spanning-tree of 
$G$ whose  root is $Leader$;
\item enumeration of the vertices with respect to a special traversal
      of the BFS spanning-tree.
\end{enumerate}

This enumeration enables the initialisation
of  anonymous 
waves, with respect to the enumeration order, i.e., the first wave
is initialised by the vertex numbered $1$, the second wave by the vertex
numbered $2$, etc.
Anonymous waves reach vertices with respect to the enumeration order 
(i.e., the wave initialised by the vertex numbered $i$ reaches any vertex
after the wave initialised by the vertex numbered $i-1$ and before the wave
initialised by the vertex numbered $i+1$)
and thus
 are implicitly identified. This fact
allows each vertex to 
 compute  its distance to any vertex without the computation and the
use of the distance itself 
but by inference from the time; in this way  all pairs shortest paths 
are obtained in time $O(n)$, each message having
a constant size so that the bit complexity 
 is $O(n)$.

We deduce also a
 distributed algorithm for graph diameter with a bit complexity and a time
complexity equal to $O(n)$.

Frischknecht et al. proved (\cite{FHW12}, Theorem 5.1)
 that: ``For any $n\geq 10$ 
and $B\geq 1$ and sufficiently small $\epsilon$ any distributed
randomized $\epsilon$-error algorithm A that computes the exact diameter
of a graph
 requires at least $\Omega(n/B)$ time for
some $n$-node graph even when the diameter is at most $5$,''
where $B$ is the size of messages.

From this result we deduce that the bit complexity of our algorithm is optimal
and the time complexity is also optimal for messages of size $O(1)$.

In the remainder of this work, we explain 
how the enumeration algorithm can be applied for computing the girth,
cut-edges, cut-vertices or for recognising biconnected graphs.

\begin{remark}\label{optimal}
We may wonder whether it is possible to obtain an enumeration of vertices
such that the distance between any two vertices having two consecutive numbers 
is at most $2$. We explain in the next section why the answer
is negative. It indicates that in some certain sense our enumeration 
is optimal.
\end{remark}
\subsection{Related Work: Comparisons and Comments}

\subsubsection{Enumeration Algorithm.}
The enumeration of vertices of a connected graph such that two  consecutive
vertices of the enumeration are at distance at most $3$ is also
presented in \cite{Sekanina}; this paper
presents  a sequential algorithm for computing  such an enumeration.

Let $G$ be a graph.
A Hamiltonian path in $G$ is a path that includes all the vertices of $G$.

We recall that the cube of a graph $G$, denoted $G^3$, is 
 the graph with the set of vertices of $G$ in which there is an edge 
between two vertices $u$ and $v$  if the distance between $u$ and $v$ in
$G$ is at most
$3$.
It was noticed by C. Gavoille \cite{G14} that the existence
of such an  enumeration is equivalent to the fact that the cube of a connected
graph $G$  contains a Hamiltonian path

A 
 cycle
containing all vertices of $G$ is called a Hamiltonian cycle of $G$,
and $G$ is called a Hamiltonian graph.
From our enumeration result we deduce a well known result
\cite{CK}:
\begin{theorem}
If $G$ is a connected graph then $G^3$ is a Hamiltonian graph.
\end{theorem}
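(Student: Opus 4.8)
The plan is to reduce the statement to trees and then argue by induction. First I would fix a spanning tree $T$ of the connected graph $G$; since every path in $T$ is also a walk in $G$, we have $dist_G(u,v)\le dist_T(u,v)$ for all $u,v$, so every edge of $T^3$ is an edge of $G^3$. As $T$ and $G$ share the same vertex set, any Hamiltonian cycle of $T^3$ is a Hamiltonian cycle of $G^3$, and it suffices to prove that the cube of a tree is Hamiltonian (implicitly $|V(G)|\ge 3$, since no graph on fewer vertices is Hamiltonian). This is the same reduction that underlies the enumeration result: the enumeration produces a Hamiltonian path of $T^3$, and the present task is precisely to upgrade such a path to a cycle.

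The core is an induction on $|V(T)|$ proving a statement slightly stronger than bare Hamiltonicity, so that the cycles built for subtrees can be spliced at the root. I would root $T$ at an arbitrary vertex $r$ and carry the invariant: $T^3$ has a Hamiltonian cycle $C$ that contains an edge joining $r$ to one of its children in $T$ (a \emph{short} edge at the root). The two smallest sizes are treated directly as gadgets, and for $|V(T)|\ge 3$ I proceed as follows. Let $c_1,\dots,c_k$ be the children of $r$ with pendant subtrees $T_1,\dots,T_k$. For each $T_i$ I extract a Hamiltonian path $P_i$ of $T_i^3$ that starts at $c_i$ and ends at a vertex $d_i$ with $dist_{T_i}(c_i,d_i)\le 1$: if $T_i$ is a single vertex take $P_i=(c_i)$ and $d_i=c_i$; if $T_i$ is a single edge take that edge; and if $|V(T_i)|\ge 3$ apply the induction hypothesis to get a Hamiltonian cycle of $T_i^3$ with a short edge at $c_i$ and delete that short edge, which leaves exactly such a path.

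I would then glue these paths into one cycle through the root, namely $r,\,c_1,P_1,d_1,\,c_2,P_2,d_2,\dots,c_k,P_k,d_k,\,r$, and check that every connecting edge lies in $T^3$. The edge $r\,c_1$ has $T$-distance $1$; each junction edge $d_i\,c_{i+1}$ satisfies $dist_T(d_i,c_{i+1})\le dist_T(d_i,c_i)+dist_T(c_i,c_{i+1})\le 1+2=3$; and the closing edge $d_k\,r$ satisfies $dist_T(d_k,r)\le 2$. Hence $C$ is a Hamiltonian cycle of $T^3$, and since its initial edge $r\,c_1$ joins $r$ to a child, the invariant is preserved, completing the induction.

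I expect the main obstacle to be choosing the induction hypothesis correctly. A naive hypothesis about Hamiltonian \emph{paths} from the root fails, because splicing pushes the far endpoint from distance $1$ to distance $2$ of the root and the invariant is lost; the distance bookkeeping is unforgiving, since two grandchildren of $r$ lying in different subtrees are at $T$-distance $4>3$ and so cannot be joined in the cube. The cycle-with-a-short-edge invariant is exactly what repairs this: retaining a distance-$1$ edge at the root guarantees that, one level up, deleting it yields a path with controlled endpoints, so that every junction jumps from a grandchild to a child (distance $\le 3$) and never from a grandchild to a grandchild. Verifying the base gadgets and these distance inequalities is the only routine part.
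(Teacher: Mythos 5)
Your proof is correct, but it follows a genuinely different route from the paper's. The paper obtains the theorem as a corollary of its distributed enumeration: the traversal $Trav$ of a BFS spanning-tree numbers the vertices so that $v_i$ and $v_{i+1}$ are at distance at most $3$ (its Lemma 3), which is a Hamiltonian path of $G^3$, and the cycle is closed by the (unstated but easily checked) fact that the last-numbered vertex is a child of the root $v_1$, hence at distance $1$ from it. You instead give a self-contained induction on rooted subtrees with a strengthened invariant --- a Hamiltonian cycle of $T^3$ containing an edge from the root to one of its children --- which you convert, by deleting that edge, into the path gadgets ($P_i$ from $c_i$ to some $d_i$ with $dist_{T_i}(c_i,d_i)\le 1$) needed to splice the subtrees; your distance checks ($1$, $\le 1+2=3$, and $\le 1+1=2$ for the root--child, junction, and closing edges) are all right, and the reduction to a spanning tree is sound since tree distances dominate graph distances. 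What each approach buys: the paper's traversal is explicitly implementable as a constant-size-message distributed algorithm (which is the point of the paper), and the Hamiltonicity of $G^3$ falls out for free; your induction is the cleaner purely combinatorial argument (essentially the classical Sekanina/Karaganis proof the paper cites as well known), makes the ``why distance $3$ and not $2$'' bookkeeping transparent, and directly produces a cycle rather than a path that must be closed. One minor quibble: your remark that a Hamiltonian-\emph{path} induction hypothesis ``fails'' is slightly overstated --- the hypothesis ``$T^3$ has a Hamiltonian path from the root to a vertex at distance at most $1$ from it'' is exactly equivalent to your cycle-with-short-edge invariant and inducts just as well; what fails is only the version without control on the far endpoint.
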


As for the cube of a graph, 
the square  of a graph $G$, denoted $G^2$, is 
 the graph with the set of vertices of $G$ in which there is an edge 
between two vertices $u$ and $v$  if the distance between $u$ and $v$ in
$G$ is at most
$2$.

The previous theorem is no longer true for the square of a tree
as indicated by the next theorem.
Let $K_{1,3}$ be the tree with one internal vertex and three leaves.
Let $S(K_{1,3})$ be the subdivision of $K_{1,3}$ formed by inserting
a vertex of degree two on each edge of $K_{1,3}$.
A graph $G'=(V',E')$ is called a subgraph of a graph $G=(V,E)$
if $V'\subseteq V$, $E'\subseteq E$ and $V'$ contains all the 
endpoints of the edges in $E'$.
Regarding the characterisation of trees with Hamiltonian square, Harary 
and Schwenk \cite{Harary} proved that:
\begin{theorem}
Let $T$ be a tree with at least $3$ vertices. $T^2$ is a Hamiltonian
graph if and only if $T$ does not contain $S(K_{1,3})$ as a subgraph.
\end{theorem}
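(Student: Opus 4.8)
The plan is to prove both implications through the structural characterisation that a tree $T$ avoids $S(K_{1,3})$ as a subgraph if and only if $T$ is a \emph{caterpillar}, i.e.\ the subtree $T^-$ obtained by deleting all leaves of $T$ is a (possibly trivial) path. First I would record this equivalence: if $T$ contains $S(K_{1,3})$ with centre $c$, then $c$ and its three middle vertices all have degree at least $2$, so $c$ has degree at least $3$ in $T^-$ and $T^-$ is not a path; conversely, if $T^-$ is not a path it has a vertex $c$ of degree $\ge 3$, each of whose three $T^-$-neighbours is a non-leaf of $T$ and hence carries a further vertex, yielding an $S(K_{1,3})$. The whole theorem then reduces to showing that $T^2$ is Hamiltonian exactly when $T$ is a caterpillar. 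The single fact about squares of trees I would isolate first is a cut lemma: if $\{a,b\}$ is an edge of $T$ and we split $T$ into the two components $T_a,T_b$ obtained by deleting $\{a,b\}$, then every edge of $T^2$ joining $T_a$ to $T_b$ is incident to $a$ or to $b$ (because a crossing pair $x,y$ satisfies $dist_T(x,y)=dist_T(x,a)+1+dist_T(b,y)\le 2$, forcing $x=a$ or $y=b$).

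For the necessity (``only if'') direction I would argue the contrapositive: assuming an $S(K_{1,3})$ with centre $c$, middles $m_1,m_2,m_3$ and leaves $\ell_1,\ell_2,\ell_3$, I let $B_1,B_2,B_3$ be the three distinct branches of $T-c$ containing $m_1,m_2,m_3$ (each with at least the two vertices $m_i,\ell_i$), and suppose for contradiction that $T^2$ has a Hamiltonian cycle $H$. Applying the cut lemma to each tree-edge $\{c,m_i\}$, every edge of $H$ crossing into $B_i$ is incident to $c$ or to $m_i$; since $H$ must cross each cut at least twice and $c$ contributes only its two $H$-edges in total, at least one branch $B_k$ receives no crossing edge at $c$. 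Then both crossings of $B_k$ sit at $m_k$, so both $H$-edges at $m_k$ leave $B_k$ while every other vertex of $B_k$ (in particular $\ell_k$) keeps both $H$-edges inside $B_k$; this makes $B_k\setminus\{m_k\}$ a nonempty union of cycles in $H$, contradicting that $H$ is a single Hamiltonian cycle spanning the whole graph.

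For the sufficiency (``if'') direction, writing the spine as $p_1\cdots p_k$ with leaf-set $L_i$ at $p_i$, I would build a Hamiltonian cycle of $T^2$ by induction on $k$, peeling the bag $\{p_k\}\cup L_k$ and adding it back. The adjacencies in $T^2$ I would use throughout are $p_i\sim p_{i+1}$, $p_i\sim p_{i+2}$, every leaf of $p_i$ adjacent to $p_{i-1},p_i,p_{i+1}$ and to the other leaves of $p_i$, while two leaves of \emph{different} spine vertices are never adjacent. The base case is a star ($T^2=K_n$) or a double star, both directly Hamiltonian. To drive the induction I would carry the invariant that the constructed cycle contains an edge $\{p_k,\alpha\}$ with $\alpha\in\{p_{k-1}\}\cup L_k$; breaking this edge and inserting the new bag along the path $\alpha,\,p_{k+1},\,\ell_{k+1,1},\dots,\ell_{k+1,d_{k+1}},\,p_k$ (all consecutive pairs being at distance $\le 2$ in $T$) yields the enlarged Hamiltonian cycle, in which $p_{k+1}$ is adjacent either to its first leaf or to $p_k$, so the invariant is restored.

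The main obstacle I anticipate is the sufficiency construction rather than the counting in necessity: the leaves of a spine vertex attach in $T^2$ only through that vertex and its two spine neighbours, and this rigidity is exactly what makes a naive forward traversal fail to close into a cycle. Getting the induction to go through therefore hinges on choosing the invariant so that a usable ``splice edge'' always survives each insertion; the verification above suggests $\{p_k,\alpha\}$ with $\alpha\in\{p_{k-1}\}\cup L_k$ is the right choice, but one must still check the small base cases and the degenerate situations (empty leaf-sets, short spines) where the spine endpoint behaves like a leaf.
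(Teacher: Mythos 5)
The paper does not prove this statement: it is quoted as a known theorem of Harary and Schwenk and supported only by the citation \cite{Harary}, so there is no in-paper proof to compare against. Your argument is therefore necessarily an independent one, and on inspection it is sound. The reduction from ``$T$ contains no $S(K_{1,3})$'' to ``$T$ is a caterpillar'' is correct (the only point needing care is that the three second-neighbours produced in the converse direction are pairwise distinct, which follows from acyclicity), and the cut lemma for $T^2$ across a tree edge is exactly right. In the necessity direction the counting is valid: the three cuts $\{c,m_i\}$ each need at least two crossings of the Hamiltonian cycle $H$, each crossing is incident to $c$ or to $m_i$, and $c$ can serve at most two crossings in total, so some branch $B_k$ is entered only through $m_k$; both $H$-edges at $m_k$ then leave $B_k$, stranding $\ell_k$ and the rest of $B_k\setminus\{m_k\}$ in a $2$-regular proper sub-collection of $H$, which is impossible for a single cycle. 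The sufficiency induction with the splice edge $\{p_k,\alpha\}$, $\alpha\in\{p_{k-1}\}\cup L_k$, also goes through: every consecutive pair on the inserted path $\alpha,p_{k+1},\ell_{k+1,1},\dots,\ell_{k+1,d},p_k$ is at distance at most $2$ in $T$, and the new cycle contains $\{p_{k+1},\ell_{k+1,1}\}$ or $\{p_{k+1},p_k\}$, restoring the invariant. The only remaining work is the routine material you yourself flag: the base case must be taken as the first one or two bags (a star or double star), and one should observe that an endpoint $p_1$ of the spine always carries at least one leaf when the spine has length at least $2$, which keeps the degenerate cases under control. None of this is a gap in the idea; it is bookkeeping.
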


In fact for our work, a priori, we only need  a Hamiltonian path.
Remark \ref{optimal} is a direct consequence of the 
following result.
In \cite{RR11}, it is proved that the square of a tree $T$
has a Hamiltonian path if and only if $T$ is a horsetail.
The definition
of a horsetail is rather technical thus we do not provide it; in our context
the important fact is that the family of trees which are not
horsetails is infinite.

\subsubsection{All Shortest paths, Diameter, Girth...}
The computation of all pairs shortest paths, of the diameter or of the girth
is the subject of many studies. Very complete recent surveys on these
 questions and on associated results can be found in 
\cite{FHW12,HW12,PRT12,LP13,N14}.

Known results depend on the size of the messages, denoted $B$ in the sequel,
 that a vertex can transmit
to its neighbours. Furthermore it depends also on whether algorithms compute
exact values or approximations (approximations enable in some cases an
improvement in
the running time).
Among the most recent results one can cite:
\cite{WW10,AB11,RT11,RW11,RW12,FHW12} \cite{HW12,PRT12},
\cite{LP13,N14}.

Frischknecht et al. 
\cite{FHW12} established an $\Omega(n/B)$ 
 lower bound for the number of communication rounds needed for computing
the diameter of a graph (they use a non-trivial technique of transferring
lower bounds from communication complexity and graph-constructions).
Thus the challenge
for the computation of all pairs shortest paths or of the diameter
 in linear time (in this context) concerns the size of messages.
Almeida et al. present in \cite{AB11} an algorithm with a time
complexity $O(D)$ (where $D$ is the diameter of the graph)
with large messages: $B=O(n\log n)$.
The best known distributed algorithms for computing the diameter with 
$B=O(\log n)$
are presented  in \cite{PRT12,HW12}.
Both assume that the size $n$
of the graph is known and
each vertex has
a unique identifier from $\{1,...,n\}$.
In both cases, algorithms compute  BFS spanning-trees rooted at each vertex and compute
distances between any two vertices. The time complexity of both
algorithms is $O(n)$.
The key point is that there is no collision between messages
of different BFS spanning-trees construction processes: 
at any time a vertex is active for 
the construction of at most one BFS spanning-tree. 
Messages enable the computation of 
distances between vertices so that the size of messages 
is $O(\log n)$ and the bit complexity
of both algorithms is $O(n \log n).$ 

\begin{remark}
Our initial knowledge and hypotheses 
on graphs are equivalent to the initial knowledge and initial hypotheses
on  graphs in \cite{PRT12,HW12} in the sense that one can be 
obtained from the other in a linear time with a linear bit complexity.
\end{remark}

\par
\noindent
\vspace{1.5cm}
\begin{minipage}{\linewidth}
\centering
\begin{tabular}{|p{5.cm}|p{0.9cm}|p{4.8cm}|p{2.4cm}|}
\hline

& Time & Message size (number of bits) & bit complexity \\

\hline

Almeida et al. \cite{AB11}&  $O(D)$ &  $O(n\log n)$ & $O(Dn\log n)$ \\
\hline
Holzer and Wattenhofer \cite{HW12} & $O(n)$   & $O(\log n)$ &$O(n\log n)$ \\
\hline
Peleg et al. \cite{PRT12} &  $O(n)$ & $O(\log n)$& $O(n\log n)$\\
\hline
This paper & $O(n)$ & $O(1)$ & $O(n)$\\

\hline
\end{tabular}\par
\vspace{0.5cm}
This table  summarises the comparison between the complexities
of various diameter\\ 
algorithms and the complexities of the 
diameter algorithm presented in this paper.

\bigskip

\end{minipage}
\vspace{0.5cm}

General considerations and
results concerning  cut-edges, cut-vertices and biconnectivity
are presented in
\cite{PritchardT11,Chaudhuri98,Hohberg90}.
Thurimella \cite{Thurimella97} proved that the diameter is a more
precise parameter for the time complexity of finding
cut-edges, cut-vertices or deciding the biconnectivity, more precisely,
Thurimella obtained time complexity  $O(D + \sqrt n \log^* n)$
for these problems  on a graph $G$ where $D$ is the diameter of $G.$

This paper is organised as follows. Section 2 presents a distributed
enumeration algorithm,
denoted $DEA$,
 and proves that its time complexity and its bit
complexity are linear. Section 3 applies Algorithm DEA
to all pairs shortest paths. Section 4 gives an immediate application
for computing the diameter.
Section 5
explains how anonymous waves
enable the computation of the girth, cut-edges and cut-vertices
with a linear time complexity and a linear  bit complexity.
\section{A Distributed Enumeration Algorithm} 
This section describes the steps of Algorithm $DEA$ which 
enumerates vertices of a given graph $G$ having a distinguished
vertex, denoted $Leader$,:
\begin{enumerate}
\item computation of a BFS spanning-tree of $G$ whose  root is $Leader$;
\item enumeration of the vertices with respect to a special traversal
      of the BFS spanning-tree.
\end{enumerate}

Consider a vertex $v$ in a rooted tree with root $Leader$.
The length of the unique path from $Leader$ to  $v$ is
the level of $v$.
 Any node
$w$ on the unique path from $Leader$ to $v$ is called an ancestor of $v$.
If the last edge on the unique path from $Leader$ to a vertex $v$
is $\{w,v\}$ then $w$ is the parent of $v$ and $v$ is a child of $w$.
A leaf is a vertex with no child.
Two vertices $v$ and $w$ are brothers if they have the same parent.
We consider ordered trees (also called plane trees),  meaning that
in the definition above 
a total order is assigned to each set of children of each vertex
(in our case, the total order is the order induced by 
the port numbering).
 Thus if we consider a vertex $v$ having $k$ children we can
speak of the first child, of the second child etc. If $v$ and $w$ are brothers,
let $u$ be their parent;
$w$ is said to be the next brother of $v$ if it is  the next 
successor of $v$
with respect to  the total
order assigned to the children of $u$.


\subsection{Computing the Breadth-First-Search 
Spanning-Tree Rooted at 
Leader}
The first step of Algorithm $DEA$ computes a BFS spanning-tree.
Starting from Leader, the spanning-tree, denoted $BFS$-$ST$,
 is computed level by level by
the well-known procedure BFS (see \cite{Peleg} p. 50).

Initially each vertex is in the state $waiting$, in detail:
\begin{enumerate}
\item  Leader sends the signal $Start$ to all its neighbours;
\item any vertex in state $waiting$ receiving a $Start$ signal from one
or more neighbours at time $t$ does:
\begin{enumerate}
\item chooses as its parent the first such neighbour (in the order 
of enumeration  of its ports, for example);
\item at time $t+1$ sends $Accept$ to its parent and $Reject$ to the 
other neighbours that sent $Start$;
\item at time $t+1$ sends $Start$ to all its other neighbours;
\item at time $t+2$ notes as its child any neighbours sending $Accept$ and
sends  $Reject$ to all other neighbours that sent $Start$ at time $t+1$;
\end{enumerate}
\item as soon as a non-leader vertex has carried out step 2.(c)  and
 received $OK$ from all its 
children  (a leaf has no child), it sends $OK$ to its parent;
\item as soon as Leader  has sent $Start$ and 
received  $OK$ from all its children (that is all
its neighbours), it knows the breadth-first-search tree computation is complete.
\end{enumerate}


\begin{cl}
Let $G$ be a graph having $n$ vertices and a distinguished vertex $Leader$.
The procedure BFS  computes a BFS spanning-tree of $G$
in time $O(n)$.
Its bit complexity is also $O(n)$.
\end{cl}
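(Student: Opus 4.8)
The plan is to establish three things in turn: that the procedure terminates with a correctly rooted BFS spanning-tree, that it does so within $O(n)$ rounds, and that each transmitted message costs $O(1)$ bits, so the bit complexity matches the time complexity.

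For correctness I would argue by induction on $\ell = dist_G(Leader,v)$ that every vertex $v$ at distance $\ell$ from $Leader$ first receives a $Start$ signal at round $\ell$, and never earlier. The base case $\ell=0$ is $Leader$ itself, which emits $Start$ in the first round. For the inductive step, synchrony guarantees that the $Start$ wave advances exactly one level per round: a vertex at distance $\ell$ cannot hear $Start$ before round $\ell$, since any neighbour closer to $Leader$ lies at distance $\ell-1$ and transmits $Start$ only at round $(\ell-1)+1 = \ell$ via step 2(c), and it does hear it at round $\ell$ from at least one such neighbour. Consequently, when $v$ executes step 2(a) it selects as parent a neighbour at level $\ell-1$, so the tree level of $v$ equals $dist_G(Leader,v)$. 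This is precisely the defining property of a BFS spanning-tree, and since $G$ is connected every vertex is eventually reached, so the resulting tree $T$ spans $G$.

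For the time bound, the forward $Start$ phase reaches its last vertex at round $ecc(Leader) \le D(G) \le n-1$, by the invariant above. The acknowledgement phase is a convergecast of $OK$ signals: a vertex forwards $OK$ to its parent only after completing step 2(c) and collecting $OK$ from all its children, so $OK$ climbs the tree at one level per round and reaches $Leader$ within a further $O(\mathrm{height}(T))$ rounds. Since $\mathrm{height}(T) = ecc(Leader) \le n-1$, the total number of rounds until $Leader$ detects completion is at most $2(n-1)+O(1) = O(n)$. For the bit complexity, every message belongs to the fixed alphabet $\{Start, Accept, Reject, OK\}$ of constant size, hence is encodable in $O(1)$ bits, and on each channel only a constant number of such messages is exchanged over the whole execution. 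Invoking the remark that for an algorithm using messages of size $O(1)$ the time complexity and the bit complexity coincide up to a multiplicative constant, the bit complexity is likewise $O(n)$.

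I expect the delicate point to be the synchronisation argument in the correctness step: one must verify carefully that the round-by-round advance of the $Start$ wave is tight in \emph{both} directions — no vertex is activated early from a farther vertex, and none is delayed — so that tree levels coincide exactly with graph distances rather than merely being dominated by them. The port-based tie-breaking in step 2(a) then only fixes the choice among equidistant candidates and does not affect the level, so it plays no role in the complexity analysis.
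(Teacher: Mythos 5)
Your proof is correct: the level-by-level induction showing the $Start$ wave advances exactly one level per round (so tree levels equal graph distances), the convergecast bound for the $OK$ phase, and the constant-alphabet observation for the bit complexity together constitute the standard argument. The paper states this claim without any proof, deferring to the well-known BFS procedure of Peleg, so your write-up simply supplies the routine details the authors omit and takes no different route.
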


We recall that
edges of $G$ that do not appear in the BFS spanning-tree connect
vertices either with the same level or with levels which differ by at most $1.$

\subsection{A Distributed  Enumeration Algorithm}
Once Leader knows that the BFS spanning-tree 
computation is complete, it starts a 
phase in which each vertex in turn (in an order to be
described later) starts a wave propagation which traverses the whole graph.
Thanks to the properties of the order, each vertex can calculate its distance
from any other vertex.
To define the order in which the waves are started, we  define
a traversal of the BFS spanning-tree and a numbering of vertices.

First we define a tree traversal, denoted $Trav$,
 used to visit and to number systematically
each vertex.

This tree traversal and the associated vertex numbering
may be defined iteratively as follows.
We add a loop on each leaf of the BFS spanning-tree  and  vertices are visited twice
 in a Depth-First-Search (DFS)
traversal (a leaf is visited on arriving and by following the loop). 

The traversal $Trav$ is defined by:
\begin{itemize}  
\item  if it is the first visit to a vertex then
go to the first child of the  vertex if it has a child; if it has no child
(i.e., it is a leaf) go from the leaf to itself;
\item if it is the second visit to a vertex go to the next
brother of the vertex,
if it has a next brother; if it has no next 
brother go to the parent of the vertex
if it has a parent else stop since it is the root of the BFS spanning-tree 
and the 
traversal is finished.
\end{itemize}
Thus, a visit to $v$ is immediately followed by a visit to a child or to 
a brother
or to the parent of $v$ or to $v$ itself (if $v$ is a leaf). 

\begin{cl}
Each vertex is visited twice.
\end{cl}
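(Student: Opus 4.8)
The plan is to prove the claim by structural induction on the subtrees of the BFS spanning-tree, establishing a stronger invariant from which the statement follows immediately. For a vertex $v$, let $T_v$ denote the subtree rooted at $v$. I would prove the following invariant: once the traversal $Trav$ makes its first visit to $v$, it visits every vertex of $T_v$ exactly twice, the last of these being the second visit to $v$, and the move performed immediately after this second visit to $v$ is the one prescribed for a second visit of $v$ by its own context (go to the next brother of $v$, or to the parent of $v$, or stop if $v$ is the root). In other words, the traversal \emph{enters} $T_v$ through the first visit to $v$ and \emph{leaves} it through the second visit to $v$, touching each vertex of $T_v$ exactly twice in between.

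First I would treat the base case where $v$ is a leaf. By the definition of $Trav$, the first visit to $v$ is immediately followed by the move along the added loop from the leaf to itself, which is precisely the second visit to $v$; no other vertex of the single-vertex subtree is involved, so $v$ is visited exactly twice and the next move is the one prescribed for the second visit.

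For the inductive step, suppose $v$ has children $w_1, \dots, w_k$ listed in the order induced by the port numbering. The first visit to $v$ sends the traversal to its first child, giving the first visit to $w_1$. Applying the induction hypothesis to $T_{w_1}$, this subtree is traversed with each of its vertices visited exactly twice, ending with the second visit to $w_1$; since $w_1$ has $w_2$ as next brother, the following move is the first visit to $w_2$. Iterating this argument for $w_2, \dots, w_k$, each subtree $T_{w_i}$ is entered exactly once and fully traversed, and after the second visit to the last child $w_k$ — which has no next brother — the prescribed move is to the parent, i.e. the second visit to $v$. Hence every vertex of $T_v$ is visited exactly twice: $v$ once at entry and once at exit, and each vertex of each $T_{w_i}$ twice by the induction hypothesis. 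The move after the second visit to $v$ is then determined by $v$'s own role as a child, which is exactly what the invariant asserts.

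The claim now follows by applying the invariant to $v = Leader$: the first visit to the root is the start of $Trav$, the invariant guarantees that every vertex of the whole tree is visited exactly twice, and since the root has neither a next brother nor a parent, the move after its second visit is to stop, so the traversal terminates precisely when all $2n$ visits have occurred. The step I expect to require the most care is the chaining in the inductive step: one must check that the transition rules align so that finishing the traversal of $T_{w_i}$ lands exactly on the first visit of $w_{i+1}$, and that finishing $T_{w_k}$ returns control to $v$ as its second visit rather than revisiting some child — this is where the ``next brother / else parent'' dichotomy of the second-visit rule does the essential bookkeeping.
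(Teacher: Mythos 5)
Your proof is correct. The paper states this claim without any proof at all, so there is nothing to compare against; your structural induction on subtrees, with the strengthened invariant that the traversal enters $T_v$ at the first visit to $v$, covers every vertex of $T_v$ exactly twice, and exits at the second visit to $v$, is exactly the standard argument one would supply here, and the chaining through the children via the ``next brother / else parent'' rule is handled correctly.
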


Let $v$ be a vertex, $\nu_v^{(1)}$ (resp. $\nu_v^{(2)}$)
 denotes the number of vertices visited
before the first visit to $v$ (resp. before the second visit to $v$).

By an induction on the level of vertices:
\begin{lemma}
Let $v$ be a vertex,
$\nu_v^{(1)}$ is even if and only if the level of $v$  is even;
$\nu_v^{(2)}$ is even if and only if the level of $v$  is odd.
\end{lemma}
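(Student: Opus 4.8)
The plan is to first isolate the single arithmetical fact that drives everything: for every vertex $v$, the difference $\nu_v^{(2)} - \nu_v^{(1)}$ is odd. Granting this, the two assertions of the lemma become equivalent, since $\nu_v^{(1)}$ and $\nu_v^{(2)}$ then always have opposite parities, while ``level even'' and ``level odd'' are complementary; so it suffices to prove only the first assertion, namely that $\nu_v^{(1)}$ is even if and only if the level of $v$ is even.

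To establish that $\nu_v^{(2)} - \nu_v^{(1)}$ is odd, I would analyse the portion of the traversal $Trav$ lying between the first and the second visit to $v$. After the first visit, $Trav$ enters the subtree rooted at $v$ and, because a second visit to a child leads to the next brother and only the second visit to the \emph{last} child has no next brother, the traversal does not return to $v$ until that last child is left (for a leaf, the added loop produces the second visit immediately). Hence the visits occurring from the first visit to $v$ up to, but not including, the second visit to $v$ are exactly the first visit to $v$ itself together with both visits to each of the remaining $s-1$ vertices of the subtree of $v$, where $s$ is the size of that subtree. This gives $\nu_v^{(2)} - \nu_v^{(1)} = 1 + 2(s-1) = 2s-1$, which is odd.

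Next I would prove the first assertion by induction on the level, the crux being to show that \emph{all} children of a vertex $u$ carry a value $\nu^{(1)}$ whose parity is opposite to that of $\nu_u^{(1)}$. For the first child $v_1$ this is immediate, because the first visit to $v_1$ follows directly upon the first visit to $u$, so $\nu_{v_1}^{(1)} = \nu_u^{(1)} + 1$. The delicate point is the remaining children, which lie at the same level as $v_1$ and are therefore invisible to a naive induction on level. Here the odd-difference fact rescues the argument: if $w$ is the child immediately preceding $v$ among the children of $u$, then the first visit to $v$ follows directly upon the second visit to $w$, whence $\nu_v^{(1)} = \nu_w^{(2)} + 1$; since $\nu_w^{(2)} - \nu_w^{(1)}$ is odd, $\nu_v^{(1)}$ and $\nu_w^{(1)}$ share the same parity. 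Walking along the list of children, every child thus inherits the parity of the first child, so every child of $u$ indeed has $\nu^{(1)}$ of parity opposite to $\nu_u^{(1)}$.

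Finally I would close the induction. The base case is the root $Leader$, which has level $0$ and $\nu^{(1)} = 0$, both even. For the step, if $u$ sits at level $\ell$ and the induction hypothesis gives ``$\nu_u^{(1)}$ even iff $\ell$ even'', then by the previous paragraph each child of $u$, which sits at level $\ell+1$, has $\nu^{(1)}$ of the opposite parity, that is ``even iff $\ell$ odd iff $\ell+1$ even'', which is exactly the claim at level $\ell+1$. I expect the main obstacle to be precisely the brother case, since brothers are not at a lower level; the reduction from $\nu_w^{(2)}$ back to $\nu_w^{(1)}$ via the odd-difference fact is the step that makes the single-parameter induction on level go through.
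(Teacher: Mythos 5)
Your proof is correct and follows essentially the same route as the paper, which simply asserts the lemma ``by an induction on the level of vertices'' without giving details. You supply exactly the details that make that induction work --- in particular the observation that $\nu_v^{(2)}-\nu_v^{(1)}=2s-1$ is odd, which handles the non-first children (brothers at the same level) and also yields the paper's subsequent corollary on the opposite parities of $\nu_v^{(1)}$ and $\nu_v^{(2)}$.
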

From the  previous lemma:
\begin{corollary}
Let $v$ be a vertex. 
For each run of  $Trav$,  if $\nu_v^{(1)}$ is odd (resp. even) then
$\nu_v^{(2)}$ is even (resp. odd).
\end{corollary}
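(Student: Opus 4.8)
The plan is to derive the corollary as an immediate logical consequence of the preceding lemma, using the parity of the level of $v$ as the common hinge connecting the two statements. The lemma provides two biconditionals: $\nu_v^{(1)}$ is even iff the level of $v$ is even, and $\nu_v^{(2)}$ is even iff the level of $v$ is odd. Since every integer is either even or odd, the parity of the level of $v$ is a single binary quantity that controls both $\nu_v^{(1)}$ and $\nu_v^{(2)}$, and eliminating it yields the desired relation.

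Concretely, I would argue by cases on the parity of $\nu_v^{(1)}$. First suppose $\nu_v^{(1)}$ is odd. Taking the contrapositive of the first biconditional of the lemma, the level of $v$ is odd. Feeding this into the second biconditional of the lemma gives that $\nu_v^{(2)}$ is even, which is exactly what the corollary claims in this case. Conversely, suppose $\nu_v^{(1)}$ is even; then by the first biconditional the level of $v$ is even, and by the contrapositive of the second biconditional $\nu_v^{(2)}$ is odd. Thus in both cases $\nu_v^{(1)}$ and $\nu_v^{(2)}$ have opposite parities, establishing the statement for each run of $Trav$.

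There is no real obstacle here: all the combinatorial content has already been absorbed into the lemma, whose proof proceeds by induction on the level of vertices. The corollary is purely a transitivity argument chaining the two equivalences of the lemma through the parity of the level of $v$, and the only thing to verify is that the two biconditionals indeed refer to the \emph{same} vertex $v$ and hence to the same level parity, which they do by construction.
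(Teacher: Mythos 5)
Your proof is correct and follows exactly the paper's route: the paper derives this corollary directly from the preceding lemma (stated simply as ``From the previous lemma''), and you have merely spelled out the straightforward case analysis chaining the two biconditionals through the parity of the level of $v$.
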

\noindent
The number of a vertex $v$ is obtained by 
computing the number of visited vertices  during the tree traversal
before the first or the second visit to $v.$
More precisely:
\begin{definition}
The vertex numbered $k$ is the $k^{th}$ visited 
vertex such that an even number of vertices have been visited before it;
it is denoted $v_k$.
\end{definition}

An example of a run of $Trav$ and the numbering of vertices
 is given in Fig. $1$.

\begin{figure}
\begin{center}
\epsfig{file=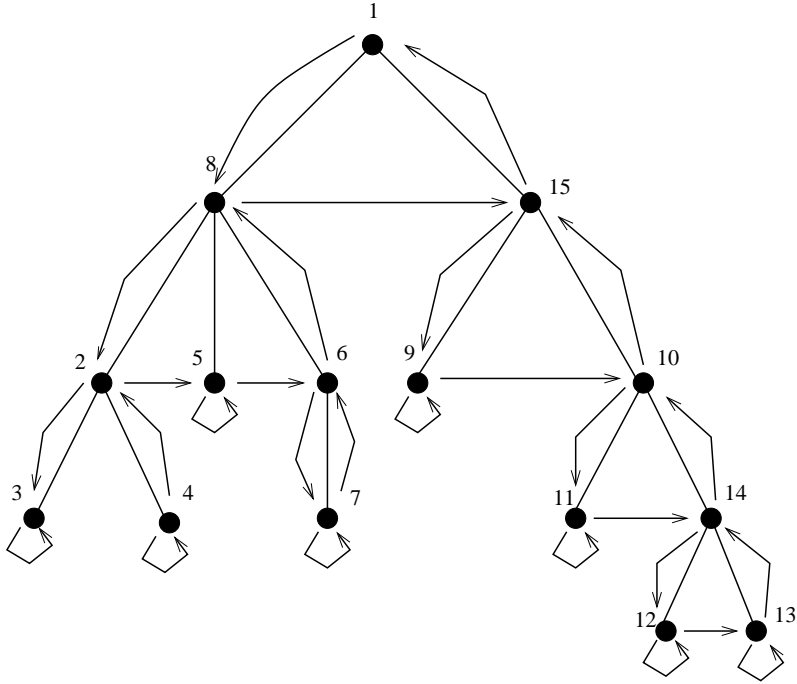}
\caption{An example of a run of $Trav$ with the associated
 numbering of the vertices.}
\end{center}
\end{figure}

By induction on the level of a vertex, we have:
\begin{lemma}
Let $v$ be the vertex numbered $k$. 
If $\nu_v^{(1)}$ is even then $k=\nu_v^{(1)}/2+1$
else  $k=\nu_v^{(2)}/2+1$.
\end{lemma}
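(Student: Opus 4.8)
The plan is to make the combinatorial structure of the traversal explicit and then read off $k$ by a direct count. First I would index the $2n$ visits performed by $Trav$ by their rank $1,2,\dots,2n$ in the order in which they occur. By definition the visit of rank $j$ is preceded by exactly $j-1$ visits, so the number of vertices visited before it equals $j-1$; this number is even exactly when $j$ is odd. Hence the visits that carry a vertex number --- namely those preceded by an even number of visits --- are precisely the visits of odd rank $1,3,5,\dots,2n-1$, and the $k$-th of them is the visit of rank $2k-1$.

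Next I would invoke the Corollary to see that the numbering is well defined: for every vertex $v$ exactly one of $\nu_v^{(1)},\nu_v^{(2)}$ is even, so exactly one of the two visits to $v$ has odd rank and is the one that assigns $v$ its number. Observing that the first visit to $v$ has rank $\nu_v^{(1)}+1$ and the second visit has rank $\nu_v^{(2)}+1$ (one more than the number of preceding visits in each case), I would then distinguish the two cases of the statement.

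If $\nu_v^{(1)}$ is even, the defining visit of $v$ is its first visit, whose rank is $\nu_v^{(1)}+1$; equating this with the rank $2k-1$ of the $k$-th odd-rank visit gives $\nu_v^{(1)}+1=2k-1$, that is $k=\nu_v^{(1)}/2+1$. If instead $\nu_v^{(1)}$ is odd, then $\nu_v^{(2)}$ is even by the Corollary and the defining visit is the second one, of rank $\nu_v^{(2)}+1$; the identity $\nu_v^{(2)}+1=2k-1$ now yields $k=\nu_v^{(2)}/2+1$. Both cases of the lemma follow.

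I do not anticipate a real obstacle: the argument is a single parity count. The only points deserving care are the off-by-one translation between ``number of preceding visits'' and ``rank'' (visit occurrences rather than distinct vertices, which the definition of $\nu_v^{(1)}$ and $\nu_v^{(2)}$ settles in favour of counting occurrences), and the appeal to the Corollary guaranteeing that each vertex owns exactly one defining visit so that $k$ is unambiguous. The authors announce an induction on the level of $v$; the same relation (rank $=2k-1$) can instead be carried through such an induction by using the preceding parity Lemma, but the positional count above makes the induction dispensable.
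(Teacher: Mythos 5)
Your argument is correct. The key observation --- that the visits assigning numbers are exactly the visits of odd rank, so the $k$-th of them has rank $2k-1$, which you then equate with $\nu_v^{(1)}+1$ or $\nu_v^{(2)}+1$ according to which of the two is even --- gives both cases of the lemma in one stroke, and your use of the Corollary to ensure that each vertex owns exactly one odd-rank (hence numbering) visit settles well-definedness. I checked the off-by-one bookkeeping on a small example (a path rooted at one end) and it is consistent with the paper's Definition of $v_k$ and with the parity Lemma, which also confirms your reading of $\nu_v^{(1)}$ and $\nu_v^{(2)}$ as counting visit occurrences rather than distinct vertices; that reading is forced, since the parity Lemma would fail otherwise. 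The route is genuinely different from the paper's: the authors announce ``by induction on the level of a vertex'' (and give no further detail), which would thread the statement through the tree structure level by level in the same way the parity Lemma is established, whereas you replace the induction by a single global positional count over the $2n$ visits, quoting the already-proved Corollary only for uniqueness of the defining visit. What the induction buys is uniformity with the surrounding lemmas; what your count buys is a shorter, self-contained argument that makes explicit why the formula is just ``rank $=2k-1$'' in disguise. No gap.
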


Now, we can state the fundamental property of the numbering we use
later:
\begin{lemma}
Let $G$ be a connected graph. We consider the numbering
of vertices of $G$ after a run of $Trav$.
The distance between $v_i$ and $v_{i+1}$ is at most $3$
in the BFS spanning-tree and thus in $G.$
\end{lemma}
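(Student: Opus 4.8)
The plan is to locate $v_i$ and $v_{i+1}$ precisely in the sequence of visits produced by $Trav$ and then to bound the distance by following the walk that $Trav$ performs between them. List the $2n$ visits in the order they occur and index them $0,1,\dots,2n-1$, so that the visit with index $j$ is preceded by exactly $j$ visits. By the Definition of the numbering, the numbered visits are exactly those occurring at an even index, and the $k$-th such visit has index $2(k-1)$. Hence $v_i$ is reached at index $2(i-1)$ and $v_{i+1}$ at index $2i$, so there is exactly one intermediate visit, to some vertex $w$, at the odd index $2i-1$. Thus $Trav$ walks $v_i \to w \to v_{i+1}$ in two consecutive steps, and by the triangle inequality it suffices to bound the sum of the tree-lengths of these two steps in the BFS spanning-tree $T$.

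Next I would classify the steps of $Trav$ according to the visit they leave. By the definition of $Trav$, a step leaving a \emph{first} visit goes to the first child, or (for a leaf) to the vertex itself; a step leaving a \emph{second} visit goes to the next brother, or to the parent, or stops. In $T$ these four moves have lengths $1$ (child), $0$ (self-loop), $2$ (next brother) and $1$ (parent) respectively. Consequently the only step of length $2$ is a next-brother step, and every other step has length at most $1$. The stop case occurs only at the second visit of the root, i.e. at the final index $2n-1$, and is therefore irrelevant whenever $i<n$, which is the only case in which $v_{i+1}$ exists.

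The crux is that, among the two consecutive steps $v_i \to w \to v_{i+1}$, at most one can be a next-brother step. Indeed, a next-brother step arrives at a \emph{first} visit of the brother, while a step leaving a first visit is a child step or a self-loop, of length at most $1$; symmetrically, a next-brother step leaves a \emph{second} visit, and a second visit is reached only by a parent step or a self-loop, again of length at most $1$. Hence whichever of the two steps is the next-brother step (if any), the other has length at most $1$, so the two lengths sum to at most $2+1=3$. This yields $dist_T(v_i,v_{i+1})\le 3$, and since $T$ is a spanning subgraph of $G$ we conclude $dist_G(v_i,v_{i+1})\le dist_T(v_i,v_{i+1})\le 3$, as required.

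The main obstacle is precisely this last point: a naive bound would add two next-brother steps of length $2$ and only give $4$. The real content is the observation that a length-$2$ step is always flanked by steps of length at most $1$, so the single visit $w$ separating two consecutively numbered vertices can never be approached and left by two next-brother steps at once. I expect it to be cleanest to read off the non-consecutiveness of brother steps directly from the definition of $Trav$ (first visits emit downward or loop moves, second visits emit sideways or upward moves), while the preceding parity lemmas serve to guarantee that consecutively numbered vertices are separated by exactly one unnumbered visit, which is what licenses the two-step walk in the first place.
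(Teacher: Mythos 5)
Your proof is correct and follows essentially the same route as the paper: locate $v_i$ and $v_{i+1}$ as two traversal steps apart and bound the length of that two-step walk in the tree. In fact you supply a detail the paper leaves implicit -- the paper merely asserts that two steps, each to a child, brother, parent or self, give distance at most $3$, whereas a naive sum of two brother-steps would give $4$; your observation that a length-$2$ next-brother step leaves a second visit and arrives at a first visit, so it cannot be adjacent to another such step, is exactly the missing justification.
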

\begin{proof}
Two vertices having consecutive numbers are separated by two steps
in the traversal of the tree. 

Furthermore 
a visit to $v$ is immediately followed by a visit to a child or a brother
or the parent of $v$ or $v$ itself (if $v$ is a leaf). 
Therefore the distance between a vertex $v$ and the vertex $w$ reached after
two steps is at most $3$. The result follows.
\end{proof}
The numbering of the vertices can be achieved  by combining steps
of the traversal and sending messages $1$ (using unary representation
of the numbers) and $End$. More precisely:
\begin{enumerate}
\item Leader sends messages $1$ and $End$ on two successive
steps to its first child; the  number of Leader is $1$;
\item a vertex receiving $1$ sends $1$ to its successor with respect to
the $Trav$  traversal (i.e., to a child, a brother, the parent or
to itself (for a leaf))
on  the following step;
\item a vertex receiving $End$ sends $1$ and $End$ to its successor
with respect to the tree traversal on
the two following steps;
\item a vertex $v$ receives $1$ from two predecessors in traversal order;
let $p_1$ and $p_2$ be the numbers of $1$ received by $v$
from these two predecessors for this traversal;
one is even and the other  is odd. Assume $p_1$ is even; then
 the  number of $v$ is $p_1/2 +1$;
\item as soon as Leader has received $End$ from its last child
 it knows the numbering  is complete.
\end{enumerate}

As in previous sections:
\begin{cl}
Let $G$ be a connected graph having $n$ vertices.
Enumeration of the vertices of $G$ 
has a time complexity and a bit complexity
equal to $O(n).$
\end{cl}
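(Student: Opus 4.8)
The plan is to establish the time complexity first and then to read off the bit complexity from it. Since the only messages the enumeration ever sends are the two symbols $1$ and $End$, every message is drawn from a constant-size alphabet and therefore has size $O(1)$. Consequently, by the Remark relating the two measures, it suffices to prove that the number of rounds is $O(n)$: the bit complexity will then equal the time complexity up to a multiplicative constant.

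To bound the number of rounds I would argue that the numbering faithfully tracks the traversal $Trav$, so its running time is governed by the length of that traversal. By the earlier Claim, every vertex is visited exactly twice, so $Trav$ consists of $2n$ visits and hence of $2n-1$ transitions between consecutive visits. I would then check that each transition is realised in $O(1)$ rounds. Indeed, a visit to $v$ is immediately followed by a visit to a child of $v$, to the next brother of $v$, to the parent of $v$, or to $v$ itself; in the BFS spanning-tree each of these successors is reached from $v$ by a walk of at most two edges, the only non-adjacent case being the move to the next brother, which is routed through the common parent. Hence the messages $1$ and $End$ need at most a constant number of rounds to travel from one visited vertex to its successor in $Trav$.

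Next I would account for the two message types together. The $End$ token trails the stream of $1$ messages and, by step $3$ of the numbering procedure, regenerates a $1$ before moving on, which at most doubles the number of rounds spent per transition and so affects only the multiplicative constant. Summing this constant cost over the $2n-1$ transitions shows that the traversal, and therefore the computation of every vertex's number, is completed in $O(n)$ rounds. Finally, Leader recognises termination as soon as it receives $End$ back from its last child, an event occurring within the same $O(n)$ bound, so the time complexity is $O(n)$. The bit complexity is then immediate: each of the $O(n)$ rounds transmits at most a constant number of bits per channel, which is exactly the conclusion of the Remark applied to an algorithm whose messages have size $O(1)$.

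The step I expect to require the most care is the per-transition round count, namely verifying that the next-brother move (two edges, routed through the parent, rather than a single adjacency) and the interleaving of the $End$ token with the $1$ messages each contribute only a bounded factor, independent of the depth of the tree and of the degrees of its vertices, so that the sum over all $2n-1$ transitions stays linear in $n$.
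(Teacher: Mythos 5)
Your proof is correct and takes the approach the paper intends: the paper itself gives no argument for this claim beyond the phrase ``As in previous sections,'' so your write-up in fact supplies the details left implicit there. In particular, you correctly identify the two points that need care --- that consecutive visits of $Trav$ may be two hops apart (the next-brother move routed through the parent) and that the $End$ token's regeneration of a $1$ at each visited vertex only delays it by a constant per visit --- and combining these with the $2n$-visit bound and the $O(1)$ message size yields the stated $O(n)$ time and bit complexity exactly as claimed.
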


\section{All Pairs Shortest Paths}
The computation of All Pairs Shortest paths needs, first, the computation
by each vertex of its distance to Leader.
\subsection{Calculating Distance from Leader to each Vertex}
Once Leader knows that the tree computation is complete, it starts
a process in which each other vertex learns its distance from Leader.
Leader sends a signal to each of its neighbours telling it that its distance
is $1$ and every vertex $v$ sends to all its children a message giving the 
child's 
distance as $1$ more than that of $v$. For simplicity, we describe a method
of achieving this using unary representation of the distances. Thus the
distance computation is obtained by the procedure $Dist$-$Cal$ defined by:
\begin{enumerate}
\item Leader sends messages $1$ and $End$ to each child on two successive
steps;
\item a vertex receiving $1$ sends $1$ to each of its children on
 the following step;
\item a vertex receiving $End$ sends $1$ and $End$ to its children on 
the two following steps;
\item the distance of a vertex from Leader is the number of $1$s received;
\item as soon as a non-leader vertex has sent $End$ to
and received OK from all its children
(including the case of a leaf which has no child), it sends OK to its parent;
\item as soon as Leader has received OK from all its children (that is
all its neighbours), it knows the distance computation is complete.
\end{enumerate}
\begin{cl}
Let $G$ be  a graph having $n$ vertices and a distinguished vertex $Leader$.
The procedure $Dist$-$Cal$ enables each vertex to know its distance to
Leader; the time complexity and the bit complexity required 
 are $O(n)$.
\end{cl}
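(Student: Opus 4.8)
The plan is to establish three facts in turn: that $Dist$-$Cal$ makes each vertex record its true distance to $Leader$, that it halts within $O(n)$ rounds, and that its bit complexity is $O(n)$. First I would observe that every message in $Dist$-$Cal$ travels along a tree edge, from a vertex to its children, so the value computed by step~$4$ is the distance from $Leader$ \emph{in} $BFS$-$ST$, i.e. the level of the vertex. Because non-tree edges of $G$ join vertices of equal level or of levels differing by one (the fact recalled just after the first Claim), the level of a vertex in $BFS$-$ST$ coincides with its distance to $Leader$ in $G$; it therefore suffices to prove that a vertex at level $d$ receives exactly $d$ messages $1$ (the source $Leader$, at level $0$, receives none and knows its distance is $0$).

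I would prove this by induction on the level $d$, carrying the stronger invariant: a vertex at level $d$ receives exactly $d$ copies of $1$ followed by a single $End$, and in consequence transmits to each of its children exactly $d+1$ copies of $1$ followed by a single $End$. For $d=1$ this holds by step~$1$, since $Leader$ emits a single $1$ and then $End$. For the inductive step, a level-$d$ vertex forwards each received $1$ (step~$2$) and, on receiving $End$, appends one further $1$ and an $End$ (step~$3$), so it sends $d+1$ ones and one $End$ to each child; as its children occupy level $d+1$, the invariant propagates, and step~$4$ reads off the value $d$ as required.

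For the running time I would trace the $End$ marker: the two-step delay in step~$3$ gives that a vertex at level $d$ receives its $End$ at round $2d$, so since the maximal level equals the eccentricity of $Leader$, which is at most $n-1$, every vertex has finished counting by round $2(n-1)=O(n)$. The acknowledgement $OK$ of steps~$5$--$6$ is a convergecast rising one level per round, adding at most a further $O(n)$ rounds, whence the total time is $O(n)$. The bit complexity is then immediate, since each message ($1$ or $End$) is a single symbol of $O(1)$ bits, so each synchronous round costs $O(1)$ bit rounds and $O(n)$ rounds cost $O(n)$ bit rounds. The part needing the most care is the termination bookkeeping of step~$5$: one must verify that each non-leader vertex forwards $OK$ exactly once and never blocks---checking in particular the vacuous leaf case, where ``sent $End$ to and received $OK$ from all children'' is satisfied the moment the leaf has received its own $End$---so that the convergecast indeed reaches $Leader$ and step~$6$ correctly detects global completion.
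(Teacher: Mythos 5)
Your proof is correct; note that the paper states this Claim without any proof at all, so your argument simply supplies the routine verification the authors leave implicit. The level-by-level induction on the stream of $1$s followed by $End$, the observation that the $End$ marker reaches level $d$ at round $2d$ (hence termination within $O(n)$ rounds plus the $OK$ convergecast), and the constant-size-message argument for the bit complexity are exactly what is intended, and your care over the vacuous leaf case in step~5 is a sensible addition.
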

\begin{remark}
As we consider a BFS spanning-tree, the level of a vertex is its 
distance to the root.
\end{remark}

\subsection{All Pairs Shortest Paths}
Once Leader knows that the enumeration of the vertices is completed and 
each node knows its distance to Leader, it 
starts a phase in which every vertex in the order of the enumeration
starts an anonymous  wave propagation for distance calculation.
The propagation of a wave follows these rules:
\begin{enumerate}
\item the source of the wave sends $wave$ once to all its neighbours;
\item a vertex $v$ receiving $wave$ at time $t$ from one or more neighbours
sends $wave$ at time $t+1$ to all other neighbours;
\item $v$ ignores any $wave$ signal received at time $t+1$ 
(from any neighbour at the same distance from the wave's source).
\end{enumerate}

Based on this mechanism, we give the  algorithm $APSP$ 
for  the all pairs shortest paths  calculation:
\begin{enumerate}
\item Leader starts a wave at time $t_1$;
\item each vertex computes $t_1$ when receiving the first wave;
\item each vertex $v_i$ starts a wave at time $t_i=t_1+5(i-1)$;
\item each vertex $v_i$ computes its distance to any vertex $v_j$
when it receives the $j^{th}$ wave;
\item each vertex $v$ computes the maximal distance to any vertex 
when no new wave arrives within eight steps after the last one.
\end{enumerate}

First:
\begin{cl}
Let $v$ be a vertex. If $v$ starts a wave following rules above then
each vertex $w$ receives $wave$ signals at times $d$ and (possibly) $d+1$
after the start of the propagation where $d$ is the distance  between $v$
and $w$.
\end{cl}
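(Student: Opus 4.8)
The plan is to establish two facts about the anonymous wave started at $v$: first, that the earliest time at which any vertex $w$ sees a $wave$ signal is exactly $d = dist_G(v,w)$; and second, that apart from this first signal at time $d$ the only further signal $w$ can ever receive arrives at time $d+1$. Both facts rest on the observation that the wave propagates exactly like a breadth-first sweep out of $v$, together with the book-keeping imposed by rules $2$ and $3$. I would begin by fixing the timing convention that a signal forwarded as a consequence of a reception at time $t$ is received by the forwarder's neighbours at time $t+1$, so that the source's neighbours receive at time $1$, and by recording that each vertex forwards the wave exactly once, at its first reception, to all neighbours other than those from which it received at that instant.

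For the first fact I would argue by induction on $d$. For the lower bound, any occurrence of a $wave$ signal at $w$ at time $\tau$ can be traced back, one forwarding step per time unit, to the source $v$; this yields a walk of length $\tau$ from $v$ to $w$, whence $\tau \geq d$. For existence at time exactly $d$, choose a neighbour $u$ of $w$ lying on a shortest $v$--$w$ path, so $dist_G(v,u) = d-1$. By the induction hypothesis $u$ first receives the wave at time $d-1$, and every vertex from which $u$ receives at that instant forwards to $u$ one step after \emph{its} own first reception, hence lies at distance $d-2$ from $v$; since $w$ is at distance $d$ it is not among them, so rule $2$ makes $u$ forward to $w$, which therefore receives at time $d$. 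Combined with the lower bound, time $d$ is indeed $w$'s first reception.

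For the second fact I would classify the neighbours of $w$ by their distance from $v$, using the standard fact (recalled after the BFS claim) that the two endpoints of any edge differ in distance from $v$ by at most $1$, so every neighbour of $w$ is at distance $d-1$, $d$, or $d+1$. A neighbour at distance $d-1$ forwards to $w$ at time $d$; a neighbour at distance $d$ first receives at time $d$ and hence forwards to $w$ at time $d+1$; and a neighbour $x$ at distance $d+1$ first receives at time $d+1$, but among its time-$(d+1)$ senders is $w$ itself (since $w$, having first received at time $d$, forwards at time $d+1$ to every neighbour it did not receive from, and $x$, being at distance $d+1\neq d-1$, is such a neighbour), so rule $2$ prevents $x$ from ever forwarding back to $w$. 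Thus the only signals reaching $w$ arrive at times $d$ and possibly $d+1$, and rule $3$ guarantees that the latter is discarded. I expect the main obstacle to be precisely this last case: verifying that the exclusion of senders in rule $2$, applied at the distance-$(d+1)$ neighbour, really does eliminate every would-be late arrival, and confirming that, because each vertex forwards only once, no signal can reach $w$ after time $d+1$ at all. This requires pinning down, for each neighbour of $w$, the exact set from which it received the wave and the single time step at which it forwards.
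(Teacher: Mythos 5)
Your proposal is correct. Note, however, that the paper states this claim without any proof at all, so there is no argument of the authors' to compare yours against; your write-up supplies the missing justification. The structure you chose --- an induction on $d$ showing that the first reception occurs exactly at time $d$, followed by a case analysis of the neighbours of $w$ according to whether their distance from $v$ is $d-1$, $d$, or $d+1$ --- is the standard BFS-wave argument and is exactly what is needed here. You correctly identify and resolve the one delicate point: a neighbour $x$ of $w$ at distance $d+1$ first hears the wave at time $d+1$ with $w$ among its senders, so rule $2$ excludes $w$ from $x$'s single forwarding step at time $d+2$, and hence no signal reaches $w$ after time $d+1$. Your analysis also explains the parenthetical remark in rule $3$ (that a time-$(t+1)$ signal comes from a neighbour at the same distance from the source), which the paper asserts without justification.
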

\begin{lemma} 
The waves start by two consecutive 
vertices (with respect to the enumeration order) won't collide with each other.
\end{lemma}
\begin{proof}
Let $t_1$ be the time at which $Leader$ starts its wave.
Every vertex knows its distance to $Leader$ and thus 
can compute $t_1$ as soon as it receives the signal $wave$ for the
first time.

Next, every vertex $v_i$ starts a new wave (denoted $w_i$)
 at time $t_i=t_1+5(i-1).$

Since the wave $w_{i+1 }$ starts $5$ rounds after the wave $w_i$ 
 and the distance between $v_i$ and $v_{i+1}$ is at most $3$, these two
waves arrive at any vertex $v$ separately and in the order $w_i$ followed
by  $w_{i+1}$
and at distance at least $2.$
\end{proof}

\begin{lemma}
Each vertex can compute its exact distance  to each other vertex.
\end{lemma}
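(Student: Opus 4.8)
The plan is to show that each vertex $w$ can recover $dist_G(v_j,w)$ for every $j$ purely from the arrival times of the anonymous waves together with its previously computed distance to $Leader$. Write $d_j=dist_G(v_j,w)$ and let $a_j$ be the first round at which a $wave$ signal of $w_j$ reaches $w$. By the preceding Claim on wave propagation, $w_j$ reaches $w$ exactly at time $t_j+d_j$ (and possibly one round later), so $a_j=t_j+d_j=t_1+5(j-1)+d_j$; thus reading off $a_j$ and subtracting the known start time $t_j$ yields $d_j$. Everything therefore reduces to showing that $w$ knows $t_1$ and can attribute each received signal to the correct wave.

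First I would pin down the time anchor. Since $Leader=v_1$ and $w$ has already computed $dist_G(Leader,w)=d_1$ by procedure $Dist$-$Cal$, it suffices to observe that the very first $wave$ signal $w$ ever receives is that of $w_1$: from $d_1\le dist_G(v_1,v_i)+d_i\le 3(i-1)+d_i\le 5(i-1)+d_i$ we get $a_1\le a_i$ for all $i$, so $w$ sets $t_1:=a_1-d_1$ and hence knows every $t_j=t_1+5(j-1)$.

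Next I would establish a clean ordering of the first arrivals. Applying the previous lemma to each consecutive pair $(v_i,v_{i+1})$ gives $a_{i+1}\ge a_i+2$, so $a_1<a_2<\cdots<a_n$ with gaps of at least two rounds; together with the Claim, all signals belonging to $w_j$ occur in the two-round window $\{a_j,a_j+1\}$. These two facts let $w$ de-interleave the anonymous stream recursively: $a_1$ is the first signal received, and for $j\ge 1$ the quantity $a_{j+1}$ is the first signal received at a round $\ge a_j+2$. Because the signals of $w_j$ lie in $\{a_j,a_j+1\}$, those of every earlier wave lie strictly before $a_j$, and every later wave first arrives no earlier than $a_{j+1}\ge a_j+2$, this rule assigns each round to the correct source even when the windows of successive waves abut. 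Finally $w$ outputs $d_j=a_j-t_1-5(j-1)$ for each $j$, which is exact.

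The step I expect to be the main obstacle is precisely this de-interleaving: since the messages are anonymous and a single wave may register at $w$ during two consecutive rounds (the ``$d+1$'' signal of the Claim), one cannot separate successive waves simply by looking for empty rounds, as two windows can be contiguous. The point to get right is that $w$ does not need empty rounds at all; knowing $t_1$ and using the guaranteed separation $a_{j+1}\ge a_j+2$, it places the boundary between consecutive waves arithmetically, so the running count of waves, and hence the index $j$ attached to each arrival, is never ambiguous.
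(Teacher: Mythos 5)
Your proof is correct and follows essentially the same route as the paper's: both recover $d(v_j,w)$ as $\tau_j - t_1 - 5(j-1)$ from the arrival time of the $j$-th wave, anchoring $t_1$ via the already-known distance to $Leader$. You go further than the paper in explicitly justifying that the anonymous waves can be told apart (the first signal received is $Leader$'s wave since $d_1\le 3(i-1)+d_i$, and the separation $a_{j+1}\ge a_j+2$ from the preceding lemma cleanly delimits each two-round window), a point the paper leaves implicit in its appeal to the non-collision lemma.
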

\begin{proof}
Let $\tau_1$,...,$\tau_n$ denote the times of arrival of the $n$ waves
at $v.$ We have:
 $\tau_1=t_1+d(v,root)$.
Now,  $\tau_i=t_i+d(v,v_i)$
thus $d(v,v_i)=\tau_i-t_i=\tau_i-t_1-5(i-1),$ and $v$ can compute its distance
from $v_i.$

Hence when no new wave arrives at a vertex $v$ (i.e., no new  wave arrives 
within eight steps after the last one), $v$ knows its exact distance from
each other vertex.
\end{proof}

Finally:
\begin{theorem}
Let $G$ be a graph having $n$ vertices and a distinguished vertex. 
There exists a synchronous distributed
algorithm which computes APSP 
of $G$ in $O(n)$ rounds with a bit complexity equal to
$O(n)$. 
\end{theorem}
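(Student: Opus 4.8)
The plan is to present Algorithm $APSP$ as the sequential composition of four phases and to reduce the theorem to the analysis of the last one. \textbf{Phase 1} builds the BFS spanning-tree rooted at $Leader$; \textbf{Phase 2} runs the traversal $Trav$ and assigns the numbers $v_1,\dots,v_n$; \textbf{Phase 3} executes $Dist$-$Cal$ so that every vertex knows its distance to $Leader$; \textbf{Phase 4} launches the anonymous waves. The first three phases have already been shown to cost $O(n)$ rounds with $O(1)$-size messages by the corresponding Claims, and $Leader$ detects the completion of each phase (it receives $OK$, resp. $End$, from all its children), so the phases can be triggered one after another without overlap. It therefore remains to analyse Phase 4 and to add the four costs.

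For \textbf{correctness}, I would argue as follows. By the wave-arrival Claim, the wave $w_i$ started by $v_i$ at time $t_i=t_1+5(i-1)$ reaches a vertex $v$ at times $t_i+d(v,v_i)$ and possibly $t_i+d(v,v_i)+1$, where $d(v,v_i)$ is the graph distance. The collision-freeness Lemma guarantees that, because consecutive sources are at distance at most $3$ while consecutive waves are $5$ rounds apart, the waves arrive at $v$ in the enumeration order and with disjoint arrival windows; hence $v$ can unambiguously identify the arrival time $\tau_i$ of the $i$-th wave and recover $d(v,v_i)=\tau_i-t_1-5(i-1)$ from the distance-inference Lemma. Since this holds for every $i$, vertex $v$ learns its distance to all other vertices, which is exactly the APSP output.

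The one local point that still needs checking is \textbf{termination detection}. I would bound the gap between the first arrivals of two successive waves at $v$: it equals $5+\bigl(d(v,v_{i+1})-d(v,v_i)\bigr)$, and by the triangle inequality together with $d(v_i,v_{i+1})\le 3$ this gap lies between $2$ and $8$. Consequently, once a vertex sees no new wave for more than $8$ consecutive rounds it can safely conclude that $w_n$ has already passed, so it knows its eccentricity and halts; this justifies the ``eight steps'' stopping rule of Algorithm $APSP$.

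For the \textbf{complexity}, the $n$ waves are launched at $t_1,t_1+5,\dots,t_1+5(n-1)$, so the last one is launched at time $t_1+O(n)$; it reaches every vertex after at most $D(G)+1\le n$ further rounds, and the constant idle window adds only $O(1)$. Since Phases 1--3 give $t_1=O(n)$, the total running time is $O(n)$. Every message used throughout ($Start$, $Accept$, $Reject$, $OK$, $1$, $End$, $wave$) is drawn from a fixed finite alphabet, hence of size $O(1)$; by the earlier Remark noting that for $O(1)$-size messages the bit complexity equals the time complexity up to a constant, the bit complexity is also $O(n)$. The only genuinely delicate ingredients --- collision-freeness and exact distance recovery --- are already supplied by the two preceding Lemmas, so at the level of the theorem the remaining work is the routine termination-window check and the summation of the four phase costs.
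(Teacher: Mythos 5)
Your proposal is correct and follows essentially the same route as the paper: the theorem there is stated without a separate proof precisely because it is the assembly of the preceding Claims (BFS tree, enumeration, $Dist$-$Cal$, wave arrival times) and the two Lemmas on collision-freeness and exact distance recovery, which is exactly what you do. Your explicit justification of the eight-round termination window via the bound $5+\bigl(d(v,v_{i+1})-d(v,v_i)\bigr)\le 8$ is a welcome detail that the paper leaves implicit, but it does not change the argument.
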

\section{Computing the Diameter}
This section indicates how to compute the diameter of $G$ by
centralising  the maximum distance and broadcasting the result.
\begin{theorem}
Let $G$ be a graph having $n$ vertices and a
distinguished vertex. 
There exists a synchronous distributed
algorithm which computes 
the diameter of $G$ in $O(n)$ rounds with a bit complexity equal to
$O(n)$. Furthermore each vertex knows the value of the diameter at the end
of the algorithm.
\end{theorem}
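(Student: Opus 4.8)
The plan is to obtain the diameter as an immediate corollary of the all pairs shortest paths algorithm of Section~3, by \emph{centralising} the maximum eccentricity at $Leader$ and then \emph{broadcasting} it. First I would run algorithm $APSP$ unchanged; by the all pairs shortest paths theorem this costs $O(n)$ rounds and $O(n)$ bit rounds, and at its end each vertex $v$ knows its eccentricity $e_v=\max_{u}dist_G(v,u)$ (this is precisely the maximal distance computed in the last step of $APSP$). I would also record that, since every vertex receives exactly one wave per vertex of $G$, after $APSP$ each vertex knows the number $n$ of vertices by counting the waves; this bound will make termination of the remaining phases trivial. It then remains to compute $D(G)=\max_{v}e_v$, to deliver it to $Leader$, and to send it back to every vertex, all within $O(n)$ rounds and $O(n)$ bit rounds.

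For the centralisation I would convergecast the maximum along the $BFS$-$ST$ using unary signals, but with a \emph{depth compensation} that keeps the cost $O(n)$ rather than $O(nD)$. Recall from procedure $Dist$-$Cal$ that each vertex $v$ knows its level $d_v$, its distance to $Leader$, and note that $e_v\geq d_v$ since the vertex farthest from $v$ is at least as far as $Leader$. The diameter phase is opened by a wave from $Leader$ at some round $t_0$, and every vertex computes $t_0$ from its arrival time and its known $d_v$, exactly as in the proof of the APSP theorem. Each vertex $v$ then emits a single pulse $1$ towards its parent at local round $e_v-d_v\geq 0$, and every internal vertex relays to its parent, at round $r+1$, any pulse received at round $r$ (coincident pulses being merged into a single one). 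Since a pulse is delayed by one round per level on its way up, its total delay is $d_v$ and it reaches $Leader$ at round
\[
  (e_v-d_v)+d_v=e_v .
\]
Hence the last round at which $Leader$ receives any pulse equals $\max_v e_v=D(G)$, and $Leader$ outputs the maximum of this round index and its own eccentricity. Because every vertex knows $n$ and $D(G)\leq n-1$, $Leader$ merely listens for $O(n)$ rounds; each channel carries at most one bit per round, so this phase runs in $O(n)$ rounds with $O(n)$ bit complexity.

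Finally I would broadcast $D(G)$ from $Leader$ down the $BFS$-$ST$ in unary, exactly as $Dist$-$Cal$ propagates distances: $Leader$ sends $D(G)$ ones followed by an $End$ marker along the tree edges, each vertex counting the ones to recover the value and relaying to its children. This again costs $O(n)$ rounds and $O(n)$ bit rounds. Summing the three phases gives the claimed $O(n)$ time and $O(n)$ bit complexity, with the value of the diameter known to every vertex at the end.

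The hard part is the centralisation step: a naive unary convergecast that waited at each level to combine complete values would cost $O(nD)=O(n^2)$, since a number as large as $n$ might have to be forwarded across a tree of depth as large as $n$. The device that defeats this is the depth offset $e_v-d_v$, which makes every vertex's contribution arrive at $Leader$ exactly at round $e_v$, so that the maximum is read off directly as the time of the last arrival and no value is ever transmitted in full over more than a single channel. One should check that merging coincident pulses (an $\mathrm{OR}$ at shared vertices) is harmless here precisely because the quantity sought is a maximum, and that the emission round $e_v-d_v$ is always nonnegative, which is guaranteed by $e_v\geq d_v$.
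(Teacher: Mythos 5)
Your proof is correct, and its overall decomposition (run $APSP$, centralise the maximum eccentricity at $Leader$ along the $BFS$ spanning-tree, broadcast it back in unary) is the same as the paper's; the interesting difference is in how the centralisation avoids the naive $O(nD)$ cost. The paper pipelines a unary convergecast of subtree maxima: each vertex $v$ transmits $M_v$ as a $max$ signal, $M_v$ ones and an $endmax$ signal, and an internal vertex starts forwarding ones as soon as it has received $max$ from all children, the key invariant being that the number of ones already forwarded never exceeds $M_v$, so the stream can always be extended to the correct value. You instead encode each eccentricity in \emph{time}: vertex $v$ emits a single pulse at local round $e_v-d_v$ (legitimate since $e_v\geq d_v$ and each vertex knows its level $d_v$ from $Dist$-$Cal$), pulses climb one level per round with coincident pulses merged, and the last arrival at $Leader$ occurs at round $e_v$ measured from the common origin $t_0$, so $D(G)=\max_v e_v$ is read off as the time of the last pulse. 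Your scheme is arguably cleaner -- it reuses the ``distance inferred from time'' idea that drives the $APSP$ algorithm itself, needs only one signal type, and makes the correctness of merging obvious because only a maximum is sought -- at the price of requiring every vertex to know $n$ (which you correctly obtain by counting waves) so that $Leader$ knows when to stop listening, whereas the paper's convergecast terminates via explicit $endmax$ signals without that knowledge. Two small points to tidy up: depending on the send/receive convention the transit delay is $d_v$ or $d_v-1$, so the arrival round is $e_v$ up to a fixed additive constant that $Leader$ must subtract; and $Leader$ should include its own eccentricity in the maximum, as you note. Neither affects the $O(n)$ time and bit bounds.
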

\begin{proof}

The eccentricities calculated in the previous section are now sent up 
via the 
BFS spanning-tree to its root. 

For a vertex  $v$, we write $m_v$ for 
the eccentricity of $v$.

Each vertex $v$ except the root will send to its parent the maximum distance,
denoted
$M_v$, from any vertex in the subtree rooted at $v$; this value will be sent in
$M_v+2$ consecutive rounds in the form of one $max$ signal followed by $M_v$ $1$
 signals
and one $endmax$ signal. A leaf can start this process as soon as it knows
its own maximum distance  $m_v$ since $M_v=m_v$.

A non-leaf vertex $v$ will wait until it has received the $max$ signal from each
of its children $w$. 
It then sends the $max$ signal to its parent and continues to send
$1$ signals until it has received the $endmax$ signal from each child. 
It now knows
the $M_w$  for each of its subtrees and  $m_v$ and so can compute 
 $M_v$ and send the required number of extra $1$
 signals to its parent. Note that
the number of $1$ signals already sent at this point is at most  $M_w$ 
for any child $w$ which sent the last $endmax$
 and so cannot be greater than $M_v$.

In this way the root knows the global maximum after diameter, and thus
at most a linear, number of rounds.

Finally, the root sends the global maximum (it suffices to use the same
unary format as previously) to each of its children who transmit it to each of their
children etc. Again this takes a linear number of rounds.
\end{proof}
Finally, the diameter is obtained by the following  steps:
\begin{enumerate}
\item Breadth-First-Search Tree Computation initiated by Leader;
\item Numbering of  vertices;
\item Calculating distance between Leader and each vertex;
\item Waves initiation and all pairs shortest paths 
calculation;
\item Centralisation of the maximum distance and broadcast of the diameter.
\end{enumerate}

\section{Other Applications of the Numbering of Vertices}
We illustrate  the power of the waves initiated by vertices with respect
to the numbering of vertices 
for computing girth and for the determination of cut-edges and cut-vertices.
\subsection{Computing the Girth}
\begin{theorem}
Let $G$ be a graph having $n$ vertices and a distinguished vertex.
The girth of $G$  can be computed by a distributed algorithm and
 known by each vertex with a time and a bit complexity equal to
$O(n)$.
\end{theorem}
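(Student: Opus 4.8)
The plan is to reuse the wave machinery of Section~3 almost verbatim and to read off short cycles from the way in which the two fronts of a single wave collide. I keep exactly the same schedule: after the $BFS$-$ST$, the numbering, and the computation of the distance to $Leader$, every vertex $v_i$ starts its wave $w_i$ at time $t_i=t_1+5(i-1)$. I rely on the non-collision lemma so that each vertex can attribute every received $wave$ signal to the correct source and recover $d(v_i,u)=\tau_i-t_i$ exactly as in the $APSP$ algorithm; the cycle detection can therefore be folded into the same wave sweep, or run as an identical separate sweep, at no asymptotic cost.

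The key observation is that the reception pattern of one wave already encodes the shortest cycle through its source. Fix a source $v_i$ and a vertex $u$ with $d=d(v_i,u)$. By the earlier claim that $u$ receives a wave at times $d$ and possibly $d+1$, I let $u$ record two kinds of candidate cycle lengths for $w_i$: (i) if $u$ receives the $wave$ signal simultaneously from two or more distinct neighbours at its own arrival time $d$, it records the \emph{even} candidate $2d$; (ii) if $u$ also receives a $wave$ signal one round later, at time $d+1$, it records the \emph{odd} candidate $2d+1$. Both are detectable in the bit model, since $u$ sees the incoming bits per channel and can count how many distinct neighbours send at each time. Each vertex keeps the minimum $c_u$ of all candidates it ever records, over all waves.

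I then claim that $\min_u c_u$ is exactly the girth $g$. For $\min_u c_u\ge g$, a recorded candidate of value $L$ corresponds to a non-trivial closed walk of length $L$ through the source: in the even case the walk runs from $v_i$ along a shortest path to a neighbour $x_1$ of $u$, across the edges $x_1\to u\to x_2$ to a second neighbour $x_2\ne x_1$, and back along a shortest path to $v_i$; in the odd case it runs from $v_i$ to $u$, across the edge $u\to u'$ to an equidistant neighbour $u'\ne u$, and back. Such a closed walk contains a cycle of length between $3$ and $L$, so $g\le L$. For the reverse inequality I take a shortest cycle $C$ of length $g$ and a vertex $v$ of $C$; since $C$ is shortest, distances along $C$ coincide with distances in $G$ for pairs at cycle-distance at most $g/2$. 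If $g=2k$, the antipodal vertex $u$ of $v$ lies at distance $k$ and its two $C$-neighbours at distance $k-1$, so during the wave started by $v$ the vertex $u$ receives the signal simultaneously from two neighbours at time $k$ and records $2k=g$; if $g=2k+1$, the antipodal edge $\{u,u'\}$ of $C$ joins two vertices at distance $k$, so $u$ receives a signal at time $k+1$ and records $2k+1=g$. Hence some vertex records the candidate $g$, and $\min_u c_u=g$.

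It remains to make the value global. As in the diameter algorithm of Section~4, I centralise $\min_u c_u$ to $Leader$ along the $BFS$-$ST$ in unary, using $min$, $1$ and $endmin$ signals and combining each child's value with the local one, then broadcast the result back down the tree; a distinguished value is used when no candidate is ever recorded, in which case $G$ is acyclic and the girth is reported as infinite. Every phase — the tree, the numbering, the distance computation, the single wave sweep, and the final centralisation and broadcast — costs $O(n)$ rounds with messages of size $O(1)$, so both the time and the bit complexity are $O(n)$. The main point to check carefully is the correctness of the collision detection: I must verify that the non-collision lemma genuinely lets a vertex separate the ``same-time'' and ``next-time'' receptions of one wave from the arrivals of the neighbouring waves $w_{i-1}$ and $w_{i+1}$. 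Once that separation is granted, the two inequalities above close the argument.
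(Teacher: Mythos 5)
Your proposal is correct and follows essentially the same route as the paper: the same staggered anonymous waves, the same detection of an even candidate $2d$ (two simultaneous arrivals) and an odd candidate $2d+1$ (arrivals one round apart), and the same centralisation along the BFS spanning-tree. You are in fact more careful than the paper, which merely asserts that each vertex computes the minimal cycle through itself; your two inequalities (every recorded candidate bounds the girth from above via the closed walk it witnesses, and the antipodal vertex of a shortest cycle records exactly $g$ because shortest cycles induce no shortcuts) supply the correctness argument the paper leaves implicit, and your global formulation $\min_u c_u = g$ avoids the paper's slightly stronger per-vertex claim.
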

\begin{proof}
As for the computation of the diameter, once Leader knows that the 
enumeration  of the vertices is complete it starts a phase in which every
vertex in the order of the enumeration starts an anonymous
 wave propagation.

Let $v$ be a vertex.
If $v$ receives the signal 
$wave$ from at least two neighbours at time $d$ then
it concludes that it belongs to a cycle of length $2d.$
If $v$ receives the signal $wave$ from a neighbour at time $d$ and the signal
$wave$ at time $d+1$ from another neighbour
then it concludes that it belongs to a cycle of length 
$2d+1.$

If $v$ belongs to a cycle then there is at least one vertex $u$
of this cycle such that $u$ starts a wave and this wave will reach
$v$ simultaneously by two different  edges incident to $v$ or will reach
$v$ by an edge incident to $v$ at time $d$ and by another edge incident to
$v$ at time $d+1$. Thus the length of this cycle will be calculated by $v$.

When  $v$ knows that no new wave will arrive (i.e., no new wave arrives within
eight steps after the last one) it computes the minimal length of  cycles
to which it belongs, denoted $c_v$.

If $v$ belongs to no cycle then, by convention the length is
$0.$

Now, each vertex sends to Leader $(n-c_v)$ by following the same 
procedure as in the previous section. It has found $n$ by counting the number
of waves.

Finally, as for the diameter, Leader centralises this value and it deduces
and transmits the girth.
\end{proof}
\subsection{Computing Cut-Edges}
Let $G$ be a connected graph; an edge is said to be a cut-edge if its 
deletion  disconnects $G.$

\begin{theorem}
Let $G$ be a connected graph having $n$ vertices and a distinguished vertex.
Cut-edges can be determined by a 
distributed algorithm  and known by endpoints of cut-edges in  $O(n)$
rounds with  a bit complexity equal to $O(n)$.
\end{theorem}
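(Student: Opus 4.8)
The plan is to reuse the anonymous-wave machinery already developed for APSP and adapt it to a local characterisation of cut-edges. The key observation is that an edge $\{u,v\}$ of the BFS spanning-tree is a cut-edge precisely when it is a tree edge whose removal disconnects the subtree below it from the rest of $G$; equivalently, no non-tree edge connects the subtree rooted at the lower endpoint to the remainder of the graph. A non-tree edge either joins two vertices at the same level or two vertices whose levels differ by exactly $1$ (as recalled in the excerpt). So the core task is, for each tree edge, to detect whether some such ``back'' or ``cross'' edge spans across it. First I would have every vertex use the already-computed numbering and its distance to $Leader$: when vertex $v_i$ launches its wave, each neighbour learns the exact round at which the wave reaches it, hence can infer $\mathrm{dist}(v,v_i)$. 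A non-tree edge $\{x,y\}$ manifests itself because $x$ and $y$ are at distance $1$ but their tree paths to $Leader$ diverge; each endpoint can recognise an incident edge as non-tree by comparing levels and checking it is not the parent/child link.

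Next I would set up a bottom-up aggregation over the BFS spanning-tree, in the same unary ``$1$/$End$'' style used in $Dist$-$Cal$ and in the diameter proof. For each subtree, the goal is to compute whether any non-tree edge escapes it, i.e. has exactly one endpoint inside the subtree. The standard sequential criterion is the ``low point'': a tree edge $\{parent(v),v\}$ is a cut-edge iff no non-tree edge joins a descendant of $v$ (including $v$) to a strict ancestor of $v$ or to a vertex outside the subtree. I would compute, at each vertex $v$, a single bit $esc_v$ indicating whether such an escaping non-tree edge exists in the subtree rooted at $v$; this bit is the OR of (i) $v$ itself being incident to a non-tree edge leading to a non-descendant, and (ii) the bits $esc_w$ of its children $w$ suitably corrected to remove edges that stay inside $v$'s subtree. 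The edge $\{parent(v),v\}$ is then declared a cut-edge exactly when $esc_v$ is false.

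The main obstacle I expect is certifying the inside/outside classification of non-tree edges using only $O(1)$-size messages and without identifiers, since a vertex cannot name the far endpoint of a non-tree edge. The clean way around this is to exploit the anonymous waves together with the numbering: when $v_i$ fires its wave, a vertex $v$ receiving it at distance $1$ along a non-tree edge can, by consulting whether $v_i$ lies inside or outside its own subtree, decide the classification; membership of $v_i$ in $v$'s subtree can itself be encoded by a second bottom-up pass that marks, during the traversal $Trav$, the contiguous block of numbers occupying each subtree (each subtree corresponds to an interval of the enumeration, which a vertex can represent by its own number and the largest number below it). Comparing the source index of an incoming wave against this interval lets each endpoint decide escape locally in $O(1)$ bits. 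I would carry out the steps in this order: (1) run $DEA$ and $Dist$-$Cal$ to obtain numbering, levels, and subtree intervals; (2) run the APSP waves so that every vertex classifies each incident non-tree edge as subtree-internal or escaping; (3) perform the bottom-up OR-aggregation of $esc_v$ in unary/$End$ format; (4) each vertex with $esc_v$ false informs its parent that the connecting tree edge is a cut-edge. Each phase is a linear-time, constant-message routine already shown linear in the earlier claims, so the total time and bit complexity remain $O(n)$, giving the stated bound.
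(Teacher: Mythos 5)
Your overall strategy (classical bridge-finding: a tree edge $\{parent(v),v\}$ is a cut-edge iff no non-tree edge has exactly one endpoint in the subtree of $v$, detected by a bottom-up aggregation) is a legitimate route, but as written it has a concrete hole at its central step. You propose to aggregate a \emph{single bit} $esc_w$ per child $w$ and to combine these bits at $v$ ``suitably corrected to remove edges that stay inside $v$'s subtree.'' That correction is exactly the hard part, and it cannot be done from a bit: $esc_w = \textrm{true}$ only tells $v$ that some non-tree edge leaves subtree$(w)$, not \emph{where} it lands, so $v$ cannot decide whether that edge also escapes subtree$(v)$ or merely crosses into a sibling subtree (or into $v$ itself). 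To make this work you would have to propagate, e.g., the minimum and maximum enumeration numbers reachable by a non-tree edge from each subtree (a low/high-point computation) and compare them against $v$'s interval; that is doable in unary with $O(1)$-bit messages in the style of the diameter proof, but it is not what you wrote. Two further points need justification rather than assertion: (i) that each subtree occupies a contiguous interval of \emph{this} enumeration --- the numbering is not a DFS preorder (a vertex's number comes from its first or second visit depending on the parity of its level), so the interval property, while true, requires proof; and (ii) the mechanism by which a vertex learns the enumeration number of the far endpoint of each incident non-tree edge (it can, by observing which edge delivers wave $j$ at time $t_j+1$, but this should be argued).

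You should also know that the paper's proof is dramatically simpler and avoids all of this machinery. It uses no tree aggregation, no intervals, and no classification of non-tree edges: after the enumeration, every vertex starts a wave in order, and an edge $e$ incident to $v$ is declared a cut-edge iff, whenever $v$ receives a $wave$ signal through $e$ at some time $t$, it does not receive that wave through any other incident edge at time $t$ or $t+1$. This is a purely local timing test at each endpoint (if $e=\{u,v\}$ is a cut-edge, every path from a source on $u$'s side to any vertex on $v$'s side goes through $v$, so no near-simultaneous arrival through a second edge is possible; conversely a cycle through $e$ produces one). The lesson is that the anonymous waves already carry enough information to decide cut-edges locally, so the subtree/escape-edge decomposition --- and the aggregation difficulties it creates --- can be bypassed entirely.
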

\begin{proof}
As for the computation of the diameter, once Leader knows that the 
enumeration  of the vertices is complete it starts a phase in which every
vertex in the order of the enumeration starts a wave propagation.

Let $v$ be a vertex. Let $e$ be an edge incident to $v$.
The theorem is a direct consequence of the following fact:

Edge $e$ is a cut-edge if and only if, whenever $v$ receives the signal
$wave$ through $e$ at time $t$, it does not receive it through
another edge at time $t$ or $t+1$.
\end{proof}
\subsection{Computing Cut-Vertices, Recognising Biconnected Graphs}
Let $G$ be a connected graph. Let $v$ be a vertex of $G$.
The vertex $v$ is a cut-vertex if removing $v$ and edges incident to $v$ 
disconnects $G$.
\begin{theorem}
Let $G$ be a connected graph with $n$ vertices and a distinguished vertex.
Cut-vertices of $G$ can be determined by a 
distributed algorithm   in  $O(n)$
rounds with a  bit complexity equal to $O(n)$.
\end{theorem}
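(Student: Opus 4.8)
The plan is to reduce the statement to a purely local question at each vertex and then read off the answer from the anonymous waves already used for APSP, so that no new wave process (and hence no extra asymptotic cost) is incurred. Recall that $v$ is a cut-vertex if and only if its neighbours do not all lie in a single connected component of $G-v$. Thus for every $v$ I would try to decide, from the timing signatures collected during the $n$ waves, whether the neighbourhood of $v$ is split by the removal of $v$. As for the diameter and the girth, once each vertex has locally decided its status the one-bit verdicts are funnelled to $Leader$ along the BFS spanning-tree and broadcast back, which costs $O(n)$ rounds and $O(n)$ bits and uses messages of size $O(1)$; the whole difficulty is therefore concentrated in the local decision.

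The key structural lemma I would establish first is the \emph{single entry} property: if $v$ is a cut-vertex and $C$ is a component of $G-v$ not containing the source $u$, then the wave started by $u$ can enter $C$ only through $v$, so every vertex of $C$ first hears this wave through an edge incident to $v$, the whole of $C$ being reached no earlier than time $d(u,v)+1$. In particular each neighbour $x$ of $v$ lying in such a far component receives $u$'s wave for the first time through the edge $\{v,x\}$ at time exactly $d(u,v)+1$. This is the exact analogue of the fact underlying the cut-edge test, and it gives one implication immediately: a genuine cut-vertex always produces, for suitable sources, neighbours that are reached \emph{only} through $v$.

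The converse is where the real obstacle lies, and it is strictly harder than for cut-edges. For a cut-edge the alternative route only has to avoid a single edge, and any such route shows up as a second arrival within one step; for a cut-vertex the certifying route must avoid $v$ \emph{entirely}, and the shortest-path nature of the waves does not see this distinction. Two phenomena make a naive fixed-window test fail. First, odd cycles defeat a same-time confirmation window: in $C_{5}$ a non-cut-vertex $v$ has two neighbours whose only common connection in $G-v$ is realised by a strictly longer path, so the confirming signal always lags by one step and $v$ would be wrongly flagged. Second, a ``$t$ or $t+1$'' window (the one that works for edges) produces false negatives: if a genuine far component contains two neighbours $a,b$ of $v$ with $a\sim b$, then $a$ is ``confirmed'' by $b$ at time $t+1$ even though that confirmation itself runs through $v$. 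Hence timing alone cannot separate ``the alternative path avoids $v$'' from ``the alternative path re-enters through another edge at $v$''.

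To break this deadlock I would enrich the waves with a single ``taint'' bit recording whether a signal has already passed through the vertex currently under test: a neighbour then declares $a$ and $b$ to be in the same component of $G-v$ only when it witnesses an \emph{untainted} arrival, i.e. a wave that reached it along a $v$-free path, whose arrival time is consistent with $d(u,\cdot)$ bookkeeping (all pairwise distances being already known from the APSP phase). Since every vertex must be tested, I would run this as a constant number of additional wave sweeps keyed to the BFS levels rather than one sweep per vertex, so that at any moment each vertex plays the role of tested vertex for at most one active sweep; this keeps the message size $O(1)$, the number of rounds $O(n)$, and the bit complexity $O(n)$. The main thing to get right, and the crux of the whole proof, is exactly this taint mechanism: it must certify reachability in $G-v$ (not merely the existence of a shortest path avoiding $v$) simultaneously for all $v$ without ever inflating the per-edge traffic beyond a constant, after which correctness follows by combining the single-entry lemma with the untainted-arrival certificate.
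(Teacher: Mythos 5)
Your overall framework agrees with the paper's: reuse the $n$ anonymous waves from the APSP phase, make a purely local decision at each vertex, and funnel the one-bit verdicts to $Leader$ along the BFS spanning-tree. Your ``single entry'' lemma is also the right first half. The problem is that the second half --- the actual local test --- is exactly where your proposal has a hole that you yourself flag as ``the crux'': the taint-bit mechanism is never specified, and the claim that ``a constant number of additional wave sweeps keyed to the BFS levels'' can simultaneously certify, for \emph{every} vertex $v$, the component structure of $N(v)$ in $G-v$ while keeping per-edge traffic $O(1)$ is asserted, not argued. Connectivity in $G-v$ may be witnessed only by paths far from $v$, and it is not at all clear how constantly many tainted sweeps, each able to track only one tested vertex at a time, cover all $n$ vertices. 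As written, the proof is missing precisely the step that carries the difficulty.

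Moreover, the objections that drove you to this mechanism analyze a different test from the paper's. The paper performs the timing test at $u$ itself, not at $u$'s neighbours: for a wave with source $s\neq u$, a neighbour $y$ of $u$ forwards the $wave$ signal \emph{to} $u$ only if $y$ did not first hear the wave from $u$, i.e.\ only if $d(s,y)\le d(s,u)$; consequently $u$ hears back only from neighbours lying in the same component of $G-u$ as $s$, and all of those answer at time $d(s,u)$ or $d(s,u)+1$. Declaring $v_1\,R_u\,v_2$ whenever $u$ receives $wave$ from $v_1$ at time $t$ and from $v_2$ at time $t$ or $t+1$, and then taking the \emph{transitive closure} of $R_u$ over all $n$ waves, gives the correct criterion: $u$ is not a cut-vertex if and only if every pair of its neighbours is related. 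Your $C_5$ counterexample targets a same-time window, which the ``$t$ or $t+1$'' relation already handles (the wave sourced at the vertex opposite $u$ relates both neighbours of $u$); and your false-negative scenario ($b$ ``confirming'' $a$ at time $t+1$ through $v$) concerns receptions observed \emph{at the neighbour} $a$ --- in that scenario $v$ itself receives nothing back from $a$ or $b$, so the paper's relation is unaffected. No taint bit is needed; what is needed (and what the paper, admittedly, also leaves unproved) is the combinatorial fact that the cliques $\{y\in N(u): d(s,y)\le d(s,u)\}$, taken over all sources $s$, connect exactly the neighbours lying in a common component of $G-u$.
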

\begin{proof}
As previously, once Leader knows that the 
enumeration  of the vertices is complete it starts a phase in which every
vertex in the order of the enumeration starts a wave propagation.

Let $u$ be a vertex having at least $2$ neighbours.
We define the relation $R_u$ as follows:
 two vertices $v_1$ and $v_2$ are  related modulo the 
relation $R_u$ if
the vertex $u$  receives the signal 
$wave$ from  $v_1$  at time $t$ and the signal $wave$ from  $v_2$
at time $t$ or at time $t+1$ for some time $t.$  

The theorem is a direct consequence of the following fact:
the vertex $u$ is not a cut-vertex if and only if every pair of
 neighbours of $u$
is  related modulo the transitive closure of $R_u$.
\end{proof}
This theorem has the following corollary:
\begin{corollary}
Let $G$ be a connected graph with $n$ vertices and a distinguished vertex.
To know whether G is biconnected 
can be determined by a distributed algorithm   in  $O(n)$
rounds with a  bit complexity equal to $O(n)$.
\end{corollary}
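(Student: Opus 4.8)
The plan is to reduce the corollary to the preceding theorem on cut-vertices, using the classical characterisation that a connected graph (with at least three vertices) is biconnected if and only if it contains no cut-vertex. Once this is granted, all that remains is to let the network decide globally whether \emph{some} vertex is a cut-vertex, and to make every vertex learn this single bit of information, all within $O(n)$ rounds and $O(n)$ bit complexity.

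First I would run the algorithm furnished by the previous theorem, so that each vertex $u$ computes whether it is a cut-vertex; by that theorem this phase costs $O(n)$ rounds and $O(n)$ bits per channel. At its end every vertex $u$ holds a single boolean $b_u$ which is true exactly when $u$ is a cut-vertex.

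Next I would convergecast the disjunction $\bigvee_{u} b_u$ along the BFS spanning-tree already at hand. Each leaf sends its bit $b_u$ to its parent; each internal vertex waits for the bits of all its children, forms the logical OR of these with its own $b_u$, and forwards the resulting bit to its parent. Since the spanning-tree has depth at most $n$ and each message is a single bit, this phase takes $O(n)$ rounds and $O(n)$ bit rounds. When $Leader$ has received the bits of all its children it knows whether $G$ has a cut-vertex, hence — by the characterisation — whether $G$ is biconnected. Finally $Leader$ broadcasts this one-bit verdict down the spanning-tree, each vertex forwarding it to its children, again in $O(n)$ rounds with $O(1)$-sized messages, so that every vertex learns the answer. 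Summing the three phases keeps both the time and the bit complexity at $O(n)$.

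I expect no genuine obstacle here: the entire difficulty is already absorbed into the cut-vertex theorem, and the remaining work is a textbook convergecast followed by a broadcast of a single bit. The only points worth checking are that the tree depth together with the one-bit (non-unary) messages indeed keep both the round count and the per-channel bit count linear, and that the degenerate cases of fewer than three vertices are disposed of by the convention fixing their biconnectivity directly; neither of these affects the asymptotic claim.
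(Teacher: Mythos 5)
Your proposal is correct and follows essentially the same route as the paper: reduce to the preceding cut-vertex theorem via the characterisation that a connected graph is biconnected if and only if it has no cut-vertex. The paper's proof is just that one sentence; you additionally spell out the convergecast of the OR and the broadcast of the verdict, which the paper leaves implicit but which are straightforward and within the stated $O(n)$ bounds.
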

\begin{proof}
The result follows from the fact that a graph is biconnected
if and only if it has no cut-vertex.
\end{proof}
\section{Conclusion}
This work has been motivated by the distributed computation of all pairs 
shortest paths and the diameter with short messages. We  introduce a
distributed enumeration algorithm which uses messages of size $O(1)$.
From this enumeration we deduce algorithms for the computation
of all pairs shortest paths and
for the diameter which improve known results. 
Finally, this enumeration algorithm provides another proof of the fact that
the cube of a tree is a Hamiltonian graph and computes 
 a Hamiltonian cycle of the cube with a time complexity
and a bit complexity equal to $O(n)$.
\bibliographystyle{alpha}
\bibliography{biblio}

\end{document}